\newtheorem{thm}{Theorem}
\newtheorem{cor}{Corollary}
\newtheorem{definition}{Definition}
\theoremstyle{plain}
\newcommand{\E}{{\cal{E}}}
\newcommand{\y}{\bm{y}}
\newcommand{\gL}{\gamma_{\mathrm{L}}}
\newcommand{\gU}{\gamma_{\mathrm{U}}}
\def\E{\mathbb{E}}
\def\P{\mathbb{P}}
\def\1{\bm{1}}
\def\S{\mathcal{S}}
\begin{document}

\title{Sequential Testing for Sparse Recovery}

\author{\IEEEauthorblockN{Matthew L. Malloy, \emph{Member, IEEE}, Robert D. Nowak, \emph{Fellow, IEEE}}
\thanks{
\newline \indent Manuscript received December 7, 2012; revised January 15, 2013; accepted September 27, 2014.
This work was supported in part by AFOSR grant number FA9550-09-1-0140 and FA9550-09-1-0643.
Portions of this work were presented at the Asilomar Conference on Signals, Systems, and Computers, November 2011 \cite{ malloy2011limits}, and the International Symposium on Information Theory in Saint Petersburg, August 2011 \cite{malloy2011sequential}.  
\newline \indent The authors would like to thank Cun-Hui Zhang, Rui Castro, and Jarvis Haupt for helpful discussions.\newline \indent M. L. Malloy and R. D. Nowak are with the Department of Electrical and Computer Engineering, University of Wisconsin, Madison, WI 53715 USA (e-mail: mmalloy@wisc.edu, nowak@engr.wisc.edu).
\newline \indent Copyright (c) 2014 IEEE. Personal use of this material is permitted.  However, permission to use this material for any other purposes must be obtained from the IEEE by sending a request to pubs-permissions@ieee.org.
}
}

%\IEEEauthorblockA{ Electrical and Computer Engineering\\
%University of Wisconsin-Madison\\
%Madison, Wisconsin 53705\\
%Email: mmalloy@wisc.edu}
%\and 
%\IEEEauthorblockN{Robert Nowak}}
%\IEEEauthorblockA{
%Electrical and Computer Engineering\\
%University of Wisconsin-Madison\\
%Madison, Wisconsin 53705\\
%Email: nowak@ece.wisc.edu}}
%\thanks{}
\maketitle
%\vspace{-2cm}

\begin{abstract}
This paper studies sequential methods for recovery of sparse signals in high dimensions.  When compared to fixed sample size procedures, in the sparse setting, sequential methods can result in a large reduction in the number of samples needed for reliable signal support recovery.  Starting with a lower bound, we show any \emph{coordinate-wise} sequential sampling procedure fails in the high dimensional limit provided the average number of measurements per dimension is less then {$\log ( s ) / D(P_0||P_1)$,} where $s$ is the level of sparsity and $D(P_0||P_1)$ the Kullback–-Leibler divergence between the underlying distributions.  
A series of Sequential Probability Ratio Tests (SPRT) which require complete knowledge of the underlying distributions is shown to achieve this bound.    Motivated by real world experiments and recent work in adaptive sensing, we introduce a simple procedure termed \emph{Sequential Thresholding} which {can be implemented when the underlying testing problem satisfies a monotone likelihood ratio assumption}.  Sequential Thresholding guarantees exact support recovery provided the average number of measurements per dimension grows faster than {$\log ( s )  / D(P_0||P_1)$}, achieving the lower bound.  For comparison, we show any \emph{non-sequential} procedure fails provided the number of measurements grows at a rate less than {$\log (n) / D(P_1||P_0)$}, where $n$ is the total dimension of the problem. 
\end{abstract}

\section{Introduction}
Signal support recovery in high dimensions is a fundamental problem arising in many aspects of science and engineering.  The goal of the basic problem is to determine, based on noisy observations, a sparse set of elements that somehow differ from the others.  

In this paper we study the following problem.  {Consider a support set $\S \subset \{1,\dots,n\}$ and a random variable $Y_{i} \in \mathcal{Y}$ distributed according to 
\begin{eqnarray} \label{eqn:underlyingstats}
    Y_{i} \sim \left\{
           \begin{array}{ll}
             P_0 & i \not \in \mathcal{S} \\
             P_1 & i \in \mathcal{S}
           \end{array}
         \right. \qquad i = 1,\dots,n
\end{eqnarray}
where $P_0$ and $P_1$ are probability densities or mass functions with respect to a common dominating measure. }
The dimension of the problem, $n$, is large --  perhaps thousands or millions or more -- but the support set $\S$ is sparse in the sense that the number of elements following distribution $P_1$ is much less than the dimension, i.e., $|\S| = s \ll n$. The goal of the sparse recovery problem is to identify the set $\S$ from multiple independent realizations of the random variables $Y_1, Y_2,\dots,Y_n$.

The conventional theoretical treatment of this problem assumes that the samples are collected prior to data analysis in what is refereed to as a \emph{non-sequential} (or \emph{fixed sample size}) setting.  In this case, $m$ samples of each component are made ($m$ samples of $Y_{i}$ are gathered for each index $i$) and any test for inclusion in $\S$ is performed after the data is collected. The fundamental limits of reliable recovery are readily characterized in terms of Kullback-Leibler divergence and dimension (see Sec. \ref{sec:NonSeq}).

On the other hand, information gathering systems encountered in practice are often tasked with measuring some temporal signal or process, leaving the potential for the system to \emph{adapt} the sampling approach based on prior observations.  In this \emph{sequential} setting, the decision to take an additional sample of any component $i$ is based on prior realizations of that component.  Herein lies the advantage of sequential methods: if prior samples indicate a particular component belongs (or doesn't belong) to $\S$ with sufficient certainty, measurement of that component can cease, and resources can be diverted to a more uncertain element.  The focus of this paper is on the fundamental limits of recovery of such sequential systems.

\subsection{Main Contributions}
The results presented in this paper are in terms of asymptotic rate at which the average number of samples per dimension, denoted $m$, must increase with $n$ to ensure exact recovery of $\S$ for any fixed distributions $P_0$ and $P_1$.   For a given procedure, the probability of correctly recovering the set $\mathcal{S}$ depends on the triple $(n,s,m)$.
As the dimension of the problem grows (as $n \rightarrow \infty$), correctly recovering $\mathcal{S}$ becomes increasingly difficult, and the number of measurements must also increase if we hope to recover $\mathcal{S}$.   One manner in which we can quantify the performance of a procedure is the rate at which $m$ must grow as a function of $n$ and $s$ to ensure recovery of $\mathcal{S}$. 

As such, the main contributions are \emph{1)} to derive a lower bound on the number of measurements required for success of any {coordinate-wise} sequential procedure in the sparse setting, {\emph{2)} introduce a simple sequential procedure termed \emph{Sequential Thresholding}  which can often be implemented when $P_1$ is not fully specified (more specifically, when the underlying testing problem satisfies a monotone likelihood ratio assumption -- see Sec. \ref{sec:ImpST} and Def. \ref{def:MLR} for details)} and show this simple procedure is asymptotically optimal, \emph{3)} compare this procedure to the known optimal SPRT, and lastly \emph{4)} compare these results to the performance of any non-sequential procedure.   Table \ref{table1} summarizes these results.  

\begin{table}[ht!]
\caption{Average number of measurements per dimension for exact support recovery in high dimensional limit} % title of Table
\centering
\begin{tabular}{p{1.8cm}|p{1.95cm}|p{0.95cm}|p{2.35cm}}
 \hline
  \hline
  \emph{ non-sequential}            &  $m \geq \frac{\log n }{D(P_1||P_0)}$ & necessary  &  \vspace{.2cm} \\
  \hline
  \emph{ sequential}  &  $m \geq \frac{\log s }{D(P_0||P_1)}$ & necessary & \vspace{.2cm}\\
  \hline  
\emph{SPRT based \newline procedure} &   $m > \frac{\log s }{D(P_0||P_1)} $ & sufficient   & requires exact kno-\newline wledge of $P_0$, $P_1$ 
%  \vspace{.2cm}
\\  
  \hline
  \emph{Sequential \newline Thresholding}   &  $m > \frac{\log s}{D(P_0||P_1)} $ & sufficient & does not require exact knowledge of $P_1$
  \\
  \hline
   
\end{tabular}
\label{table1}
\end{table}

These developments are intriguing primarily for two reasons.  First, the results show that \emph{sequential} procedures succeed when the number of measurements per dimension increases at a rate logarithmic in the level of \emph{sparsity}, i.e. $\log s$.  In contrast, well known results from statistical testing show \emph{non-sequential} procedures require the average number of measurements per dimension to increase at a rate logarithmic in the \emph{dimension}, i.e. $\log n$.   
Secondly, \emph{Sequential Thresholding}, a simple, practical procedure introduced here, achieves optimal performance as the dimension grows large.   
{The procedure operates by repeatedly discarding from consideration components that exhibit strong evidence of following $P_0$.  Sequential Thresholding can be implemented when the sparse components follow certain one-sided composite hypotheses -- specifically, Sequential Thresholding requires full knowledge of $P_0$ and knowledge of a test statistic that satisfies a monotonic likelihood ratio assumption (see Def. \ref{def:MLR}).   }

\subsection{Motivation}
The problem of sparse signal recovery using sequential measurements arises in a number of commonly encountered problems in science and engineering.  In communications, spectrum sensing for cognitive radio aims to identify unoccupied communication bands in the electromagnetic spectrum.  Most bands will be occupied by primary users, but these users may come and go, leaving a {sparse set of bands} momentarily open and available for use by secondary transmitters.  As noisy samples of these occupied and unoccupied bands are collected in a temporal manner, sequential methods are a natural fit to map the occupation of the spectrum; in fact, recent work in spectrum sensing has given considerable attention to such approaches (see, for example \cite{Castro}, \cite{5454086}). 

Another captivating example a of sparse recovery problem where sequential methods are currently employed is that of the \emph{Search for Extraterrestrial Intelligence} (SETI) project.  Researchers at the SETI institute sense for narrowband electromagnetic energy from distant star systems using large antenna arrays, with the hopes of finding extraterrestrial transmission.  The dimension of the problem consists of over 100 billion stars in the Milky Way alone, each with 9 million potential `frequencies' in which to sense for narrow band energy.  The subset of planetary systems with extraterrestrial transmission is sparse (since, to the best of our knowledge, SETI is yet to make a contact).  Moreover, while researchers may have a good idea of the distribution of the background noise, $P_0$, complete knowledge of $P_1$ is of course not available, making procedures based on sequential probability ratio testing impractical.  Roughly speaking, researchers at SETI use a sequential procedure that repeatedly tests energy levels against a threshold up to five times \cite{SETI, NYtimes}.  If any of the up to five measurements are below the threshold, the procedure passes to the next frequency/star.  Should the measurements exceed the threshold on all five occasions, measurements of that star and frequency are passed to an operator for further inspection. This procedure is closely related to Sequential Thresholding. Sequential Thresholding results in substantial gains over fixed sample size procedures and can be computed without full knowledge of $P_1$.  

Sparse recovery also underlies a number of recent assay studies in biology.  {Here, biologists estimate a sparse set of genes or proteins that are critically involved in a certain process or function.  As an example, the study in \cite{hsw-drsih-2008} aims to identify a small number of genes (approximately $s =100$ out of  $n=13,071$ total) that are important to virus replication in fruit files cells.  The involvement of each gene is measured as follows.  First the functionality of the gene is suppressed (using a single gene \emph{knockout}) and the fruit fly cells are exposed to the virus under study.  Associated with the virus is a fluorescent marker, and the virus's ability to replicate is quantified by measuring the florescence produced by the infected cells.  
%The virus's ability to replicate is then measured by observing a florescent marker. 
In our model, the level of florescence observed when gene $i$ is suppressed corresponds to a realization of the random variable $Y_i$.   The biologists may have good estimates of the null distribution, $P_0$, but not of the alternative distribution, $P_1$, again making procedures based on the SPRT difficult to implement.  A number of recent publications have implemented various multi-stage (thus sequential) procedures \cite{Muller2007, sata3, Zehetmayer, hsw-drsih-2008} that operate without full knowledge of $P_1$.  The proposed procedures in general aim to reduce the total dimension of the problem and then employ traditional recovery techniques.  While a number of authors suspect such sequential methods result in increased sensitivity, the gains are not fully theoretically quantified. 
}

\subsection{Related Work}  \label{sec:IntroPrior}
{Many of the fundamental results in sequential analysis were developed by Wald, and formalized in his book, Sequential Analysis \cite{wald}.  The sequential probability ratio test (SPRT) was shown to be optimal in terms of minimizing the error probabilities and expected number of measurements for a simple binary hypothesis test.  A handful of issues arise when exact knowledge of the distributions is unavailable, including loss of optimality, which can make the SPRT impractical in many scenarios.  More specifically, in a parametric setting with a monotone likelihood ratio, if the SPRT is implemented with thresholds based on an incorrect parameter, the test can result in arbitrarily large sample size  (see \cite{lai2000sequential}, \cite{ghosh1991handbook},   and  Sec. \ref{sec:SPRTimp}). }

{Aimed at addressing the deficiencies of the SPRT, a vast body of literature is devoted to sequential tests of composite hypothesis (including work by Wald \cite{1945Wald}), and the more restrictive case of a monotone likelihood ratio setting.  Before addressing sequential testing in the sparse and high dimensional setting, we give a brief overview of this literature and refer the reader to \cite{ghosh1970sequential}, \cite{ghosh1991handbook}, and \cite{lai2000sequential} for a more complete summary of sequential testing for composite hypothesis testing. One of the most prevalent sequential procedures applicable to the monotone likelihood ratio setting is that of Lorden \cite{lorden19762}, who proposed a procedure termed a 2-SPRT, which was shown to be optimal for single parameter exponential families in a sense proposed by Kiefer and Weiss \cite{kiefer1957some} (in which the error rates are minimized at a particular worst case parameter point, but not universally). Schwarz \cite{schwarz1962asymptotic} addressed the insufficiencies of the SPRT by examining the shape of the decision regions corresponding to asymptotically optimal tests (in a Bayesian sense) as the sample size grows large.  The resulting procedure in brief operates as a sequential generalized likelihood ratio test with an \emph{indifference zone}. 
Many further approaches have been suggested in the monotone likelihood setting:   including linear stopping boundaries \cite{anderson1960modification}, curved stopping boundaries \cite{SeqAnalysis}, and various forms of truncated SPRTs.  Of these, the work of \cite{lai1988nearly} gave nearly optimal results over a  wide class of problems, but was limited practically in that the curved stopping boundary is defined asymptotically.  Ultimately, unlike the case of simple hypothesis, practical procedures with \emph{universally optimal} theoretical guarantees are not available. }

%Shwartz (1962) \cite{schwarz1962asymptotic} (asymptotic shape of the decision region) The large sample limiting shapes of the Bayes sequential testing regions of composite hypthesis are found explicitly.  The result obtained is related to the SPRT in the same way the LR stat for composite hypothesis is related to Nyman Pearson

{
Sequential testing for sparse signals was perhaps first studied by Posner in \cite{1057838}.  Motivated by the the problem of finding a lost satellite in the sky, Posner aimed to minimize the expected search time using a multistage procedure.  
%Posner's procedure first quickly scans the entire sky.  Based on these crude observations, the procedure then measures the location with highest likelihood, until it no longer has the highest likelihood.  The procedure terminates when a location has sufficient likelihood to confirm the presence of the satellite with some certainty. 
Posner's procedure is closely related to the high dimensional extension of the sequential probability ratio test (see Sec. \ref{sec:SPRT} for details).    Sequential approaches to the high dimensional sparse recovery problem have recently been given increased attention, perhaps motivated by the success of exploiting sparsity in other areas.  Related work includes \cite{4585341, 5710433}, in which the authors extend the work of \cite{1057838} to include multiple targets, encompassing a more general model, {and the work of \cite{5961845}, which aims to find a rare element amongst infinitely many}. 
}

In some of the first work to quantify the gains of sequential methods for sparse recovery \cite{5074721, Haupt}, the authors proposed a sequential procedure for recovery in additive Gaussian noise, termed \emph{Distilled Sensing}.  Our \emph{Sequential Thresholding} approach is similar to the Distilled Sensing method, however, there are a number of distinctions.   In this work we are concerned with the probability of error in exact recovery of the sparse support $\S$; Distilled Sensing controls the false discovery and non-discovery rates which is less demanding than control of the family-wise error rate.  Controlling a distinct metric gives rise to significant algorithmic differences.  From an algorithmic perspective both procedures involve of a number of passes each of which discards components following $P_0$. For theoretical guarantees, Distilled Sensing requires $\sim\log \log n$ passes (see Theorem III.1 of \cite{Haupt}), followed by an estimation step, while Sequential Thresholding uses $\sim \log n$ passes, without a final estimation step.  From an analysis perspective, the results in this paper are applicable to a larger class of problems characterized by finite Kullback-Leibler divergence; the Distilled Sensing approach is specific to the Gaussian setting.  Lastly, the results in \cite{Haupt} are presented in terms of a parametric scaling of the sparsity with dimension, while no such scaling is assumed here.

{Also closely related to the work here are the lower bounds of \cite{castro2012adaptive}. The lower bounds presented in \cite{castro2012adaptive} are stronger in that they are not restricted to the \emph{coordinate-wise} assumption, but weaker in that they are terms of the expected set difference and restricted to the Gaussian setting.  The results of \cite{castro2012adaptive} were published after the initial work in \cite{malloy2011limits, malloy2011sequential}.}

%, and later expanded the procedure to consider sensing functions that consist of random linear projections \cite{Haupt:2009:CDS:1843565.1843904} in Compressed Distilled Sensing.  

Another related set of problems is that of finding the \emph{best arm} in a multi-armed bandit game \cite{bubeck2009pure,  jamieson2013finding, even2002pac}. { Some approaches to this problem are similar in nature to Sequential Thresholding, namely the \emph{median elimination} procedure of \cite{even2002pac}, but the problem setting is fundamentally different in that the procedure aims to return a single element that is \emph{approximately} best.  Other work in the \emph{best arm} literature focuses on finding a single sparse element with \emph{tests of uniformly small probability of error} \cite{jamieson2013finding, bubeck2009pure}, which is distinct from the setting studied here. 
%Another difference is these procedures in general assume less knowledge of the distributions $P_0$ and $P_1$, resulting in order optimal procedures.
}
 
%Almost minimax bounds for composite testing \cite{2012arXiv1204.5291F}

%While sparse recovery problems have been given \cite{2012arXiv1204.5291F}

\subsection{Organization}

The remainder of the paper is organized as follows.  In Sec. \ref{sec:ProbForm} we formalize the problem.  Sec. \ref{Sec:limits} derives the necessary condition on the number of samples required for exact recovery using any procedure.  For comparison, Sec. \ref{sec:NonSeq} derives a necessary condition on the average number of measurements for non-sequential procedures.  Next, Sec. \ref{sec:SPRT} analyzes the SPRT in the sparse setting and discusses some of the shortcomings of the test when exact knowledge of the distributions is not available.  Lastly, Sec. \ref{sec:SeqThres} introduces \emph{Sequential Thresholding} and analyzes its performance.

%For comparison, we then derive necessary conditions for any non-sequential procedure in Sec. \ref{sec:NonSeq}. 

\section{Problem Formulation} \label{sec:ProbForm}
Let $\S$ be a subset of $\{1,\dots,n\}$ with cardinality $s = |\S|$.  {For any index $i \in \{1,\dots,n\}$, the random variables  $Y_{i} \in \mathcal{Y}$ are independent and distributed according to (\ref{eqn:underlyingstats}),
%\begin{eqnarray} \label{eqn:underlyingstats}
%    Y_{i} \sim \left\{
%           \begin{array}{ll}
%             P_0 & i \not \in \mathcal{S} \\
%             P_1 & i \in \mathcal{S}
%           \end{array} \qquad i = 1,\dots,n
%         \right.
%\end{eqnarray}
where $P_0$ and $P_1$ are probability distributions or mass functions with common support\footnote{{The assumption of common support can be relaxed in practice and likely leads to substantial gains, although this is not investigated.}} on $\mathcal{Y}$ defined with respect to a common dominating measure.   In words, the random variable $Y_{i}$ follows distribution $P_1(\cdot)$ if $i$ belongs to $\S$, and follows $P_0(\cdot)$ otherwise. We write $Y_{i,j}$,  for $j=1,2, \dots$, to index multiple i.i.d. samples of $Y_i$, and we refer to $P_0$ as the null distribution, and $P_1$ the alternative.
}
%Define the set of all $s$ sparse subsets of $\{1,\dots,n\}$ as $\mathbb{S}$.  
{
Our analysis is concerned with exact recovery of the set $\S$.     The family wise error rate is defined as the probability that the estimated support set differs from the true support set:
\begin{eqnarray} \label{eqn:FWER1}
%  \P_e = \P(\hat{\S} \neq \S) =  \P\left(\bigcup_{i \not \in \S} {\cal{E}}_{i} \cup \bigcup_{i \in \S } {\cal{E}}_{i} \right)
 \P_e = \P(\hat{\S} \neq \S).
\end{eqnarray}
}

%where $\mathcal{E}_i$ is the event that an error is made at index $i$. 

 %Specifically, $\mathcal{E}_i$ is either the false positive event $\{i \in \hat{\mathcal{S}} | i\ \not \in \S\}$ or false negative event $\{i \not \in \hat{\mathcal{S}} | i\ \in \S\}$.  
 
The log-likelihood ratio statistic comprised of multiple i.i.d. samples of a particular index is defined as:
\begin{eqnarray} \label{eqn:LR}
  L_i^{(\ell)}\left(Y_{i,1},\dots,Y_{i,\ell} \right) :=  \sum_{j=1}^{\ell} \log \frac{P_1(Y_{i,j})}{P_0(Y_{i,j})}.
\end{eqnarray}
Here, the superscript $\ell$ explicitly indicates the number of samples used to form the likelihood ratio and is suppressed when unambiguous.   
%The component wise assumption implies that the likelhood ratio is compared to a scalar threshold (as opposed to, for example, comparing a particular likelihood ratio against the likelihood ratios of other components).
The Kullback-Leibler divergence from distribution $P_1$ to $P_0$ is defined as:
\begin{eqnarray} \nonumber
  D(P_1||P_0) = \E_1\left[ \log \frac{P_1(Y)}{P_0(Y)} \right]
\end{eqnarray}
where $\E_1\left[\cdot \right]$ is expectation with respect to distribution $P_1$, which gives the usual convergence of the normalized likelihood ratio as $\ell$ grows large:
\begin{eqnarray}  \label{eqn:DivConv}
\frac{1}{\ell}L_i^{(\ell)} \overset{a.s.} \longrightarrow \left\{
           \begin{array}{ll}
             -D(P_0||P_1)  & i \not \in \mathcal{S} \\
             D(P_1||P_0) & i \in \mathcal{S}.
           \end{array}
         \right.
\end{eqnarray}
It is sometimes convenient to state results in terms of the maximum of $D(P_0||P_1)$ and $D(P_1||P_0)$.  In this case, we define 
\begin{eqnarray} \nonumber
D_{\mathrm{KL}} = \max \left\{D(P_0||P_1), D(P_1||P_0) \right\}.
\end{eqnarray}
In order to bound rates of convergence of particular testing procedures, we make use of the variance of the log-likelihood ratio, denoted
\begin{eqnarray} \nonumber
\sigma^2(P_1||P_0) &=& \mathrm{var}\left(L_i^{(1)} | i \in S \right)   \\
& =& \E_1 \left[ \left( \log \frac{P_1(Y)}{P_0(Y)}  - D(P_1||P_0) \right)^2  \right]. \nonumber
\end{eqnarray}

%\subsection{Sampling Procedures}
A sampling procedure $\Gamma$ is a method used to determine the number of samples taken of each index.  To be precise in characterizing a sampling procedure, we present four definitions.
\begin{definition}
\emph{Sampling procedure.} {
A collection of functions $\Gamma_{i,j}: \{Y_{i,1},\dots,Y_{i,j-1}\} \mapsto \{0,1\}$, for $i \in \{1,\dots,n\}$ and $j \in \mathbb{N}$ that defines the number of samples of $Y_i$ that are observed. } Specifically, if $\Gamma_{i,j} = 1$, then $Y_{i,j}$ is observed, and can be used in estimation of $\mathcal{S}$.  Conversely, if $\Gamma_{i,j} = 0$, then $Y_{i,j}$ is not observed, and is not used in estimation of $\mathcal{S}$.
\end{definition}

%\begin{definition} \label{def:CWS}
%\emph{Coordinate-wise sampling procedure.}
%Any sampling procedure $\Gamma$ such that $\Gamma_{i,j}$ is not a function of $Y_{i',j}$ for any $i' \neq i$.
%\end{definition} 
%\noindent 

\begin{definition} \label{def:NS}
\emph{Non-sequential (fixed sample size) sampling procedure.}
Any sampling procedure such that $\Gamma_{i,j}$ is not a function of $Y_{i',j'}$ for any $i',j'$.
\end{definition} 
\noindent

\begin{definition} \label{def:S}
\emph{Sequential sampling procedure.}
A sampling procedure in which $\Gamma_{i,j}$ is allowed to depend on previous samples, specifically, $\Gamma_{i,j}: \{Y_{i,1},\dots,Y_{i,j-1}\} \mapsto \{0,1\}$.
\end{definition}

\begin{definition} \label{def:CW}
\emph{Uniform coordinate-wise sampling procedure.}
{A sampling procedure in which $\Gamma_{i,j}$ is not a function of $i$.}
\end{definition}

Sequential procedures can make use of information as it becomes available to adjust the sample size, while non-sequential procedures, or \emph{fixed sample size} procedures, fix the number of samples taken \emph{a priori}.  Note that under this definition, the set of non-sequential procedures are a subset of sequential procedures.

{ 
In the lower bounds developed in this paper our consideration is limited to procedures that test each index in an identical manner (see definition \ref{def:CW}). The uniform coordinate-wise assumption also implies the procedure only uses samples of component $i$ to make inference about that particular component.  More specifically, the decision to re-measure a particular component or include it in the estimate of $\S$ depends only on samples of that component.  As the dimension of the problem grows large (which is our regime of interest), there is no loss of optimality associated with this restriction\footnote{The lower bounds in \cite{castro2012adaptive}, which are restricted to the Gaussian setting but do not make a coordinate-wise assumption, match the scaling of the lower bounds presented here.}.} 

%\subsection{Measurement Budget}
In order to make a fair comparison between different procedures, we limit the total number of samples in expectation.  For any procedure we require
\begin{eqnarray} \label{eqn:budget}
 \mathbb{E}\left[\sum_{i,j} \Gamma_{i,j} \right]\leq nm
\end{eqnarray}
for some $m \geq 0$.   This simply implies, on average, the procedure uses $m$ or fewer samples per dimension.

The family wise error rate of any procedure used to estimate $\mathcal{S}$ depends on the underlying distributions $P_0$ and $P_1$, the dimension, $n$, the level of sparsity $s$, and the average number of samples per component, $m$. 
Throughout, $s$ and $m$ are non-decreasing functions in $n$ (and thus, the set $\S$ is also a function of $n$).  We suppress this dependence on $n$ for ease of exposition.  Our focus will be on finding the relationship between the triple $(n,s,m)$  such that for any fixed distributions $P_0$ and $P_1$, either $\lim_{n\rightarrow \infty} \P_e =0$ (the procedure is reliable) or $\lim_{n\rightarrow \infty} \P_e > 0$ (the procedure is unreliable).
We assume $s \leq n/2$ (without loss of generality provided $P_0$ and $P_1$ are known, since if $s > n/2$, one can re-label the problem, swapping $P_0$ and $P_1$). As we are interested in sparse problems, some of the results require the assumption that $\lim_{n \rightarrow \infty} \frac{s}{n} = 0$, which is termed \emph{sub-linear} sparsity, but this scaling is stated explicitly when needed.

\section{Limits of Reliable Recovery} 
This section presents lower bounds on the number of measurements required for reliable recovery by any procedure in both the sequential and non-sequential setting.  The bounds are in terms of the expected number of samples per dimension.

\subsection{Limitation of Sequential Procedures} \label{Sec:limits}
The following theorem quantifies the limitations of \emph{any} procedure, which includes both sequential and non-sequential procedures, as non-sequential procedures are a subset of sequential procedures (from Def. \ref{def:NS} and Def. \ref{def:S}).  The bound applies to finite problems, but also implies a necessary rate at which $m$ must grow with $n$ for reliable recovery, captured in the ensuing corollary.  

%In this section we derive a lower bound on the expected number of measurements required 
%From Definitions \ref{def:NS} and \ref{def:S} any non-sequential procedures of course belong to the 
 %and relate Sequential Thesholding to the high dimensional extension of the well known sequential probability ratio test (SPRT).   

\begin{thm} \label{thm:LBSeq} \emph{Finite sample limitations of sequential procedures.}
Any uniform {coordinate-wise} (sequential) procedure with
\begin{eqnarray} \nonumber
m \leq \frac{ \log s + \log \left(\frac{1}{4\delta}\right)}{D_{\mathrm{KL}}}
\end{eqnarray}
also has 
\begin{eqnarray} \nonumber
\P_e &\geq & 1 - e^{-\delta } \approx \delta
\end{eqnarray}
where the approximation holds for small $\delta$.
\end{thm}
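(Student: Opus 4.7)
My first move is to exploit the uniform coordinate-wise hypothesis to reduce the whole problem to a single binary sequential test. Under Definition~\ref{def:CW}, the rule $\Gamma_{i,j}$ does not depend on $i$, and by assumption it sees only the past samples at coordinate $i$; combined with the inclusion rule (which likewise uses only local samples), each coordinate $i$ is tested by the same sequential procedure applied to an i.i.d.\ stream from either $P_0$ or $P_1$. Let $\tau$ denote the common stopping time, let $M_b = \mathbb{E}_b[\tau]$ for $b\in\{0,1\}$, and let $\alpha=\P_0(i\in\hat{\mathcal{S}})$, $\beta=\P_1(i\notin\hat{\mathcal{S}})$ be the Type I and Type II error probabilities of this coordinate-wise test. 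Crucially, because the $Y_{i,j}$ are independent across $i$ and each coordinate's decision uses only its own samples, the events $\{i\in\hat{\mathcal{S}}\}$ across $i$ are mutually independent.

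The plan then combines three ingredients. (i) \emph{Error lower bound from independence.} Since all $n-s$ null decisions and all $s$ alternative decisions are independent,
\begin{equation*}
1-\P_e \;=\; (1-\alpha)^{n-s}(1-\beta)^s \;\leq\; \exp\bigl(-(n-s)\alpha - s\beta\bigr),
\end{equation*}
and using $s\leq n/2$ (so $n-s\geq s$) this yields $\P_e \geq 1-\exp\bigl(-s(\alpha+\beta)\bigr)$. (ii) \emph{Budget constraint.} The requirement $\mathbb{E}[\sum_{i,j}\Gamma_{i,j}]\leq nm$ reads $(n-s)M_0 + s M_1 \leq nm$, which in particular gives $M_0 \leq nm/(n-s) \leq 2m$ and $M_1 \leq nm/s$. (iii) \emph{Wald's sequential sample-size inequality.} Applying the data processing inequality to the stopped log-likelihood ratio $L_\tau$ (which is legitimate by Wald's identity once $\mathbb{E}_b[\tau]<\infty$) gives
\begin{equation*}
M_0\,D(P_0\|P_1) \geq d_b(\alpha\|1-\beta), \qquad M_1\,D(P_1\|P_0) \geq d_b(1-\beta\|\alpha),
\end{equation*}
where $d_b(p\|q)$ is the Bernoulli KL. For error probabilities not too close to $1$, these specialize to $M_0 D(P_0\|P_1) \gtrsim \log(1/\beta)$ and $M_1 D(P_1\|P_0) \gtrsim \log(1/\alpha)$ up to explicit constants.

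Finally, I would assemble: pick whichever side ($D(P_0\|P_1)$ or $D(P_1\|P_0)$) realizes $D_{\mathrm{KL}}$ and use the corresponding Wald bound together with the matching budget inequality to obtain a lower bound of the form $\max(\alpha,\beta)\geq \tfrac{1}{c}\exp(-c\,m\,D_{\mathrm{KL}})$. Substituting into $\P_e \geq 1-\exp(-s(\alpha+\beta))$ and forcing the exponent to be at least $\delta$ yields the threshold $m \leq (\log s + \log(1/(4\delta)))/D_{\mathrm{KL}}$. The main obstacle I anticipate is bookkeeping the absolute constants so that $\tfrac{1}{4\delta}$ (and not just an unspecified constant multiple) emerges in the final expression; in particular, $d_b(\alpha\|1-\beta)$ must be bounded cleanly when neither $\alpha$ nor $\beta$ is negligible, and the $s\leq n/2$ slack that converts $nm/(n-s)$ into $2m$ must be absorbed neatly. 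A secondary point to check is that the Wald identity is applied under integrability of $\tau$; if the optimal procedure has $\mathbb{E}_b[\tau]=\infty$ the bound is only strengthened, so that case is benign.
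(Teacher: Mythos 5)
Your overall architecture is the same as the paper's: reduce to a single per-coordinate sequential test via the uniform coordinate-wise assumption, use independence across coordinates to write $1-\P_e=(1-\alpha)^{n-s}(1-\beta)^{s}\leq e^{-(n-s)\alpha-s\beta}$, and combine a Wald-type information inequality on the stopped log-likelihood ratio (the paper cites Thm.~2.39 of its sequential-analysis reference, i.e.\ exactly your $\E_0[\tau]\,D(P_0\|P_1)\geq d_b(\alpha\|1-\beta)$, cleaned up with $\alpha\log\alpha+(1-\alpha)\log(1-\alpha)\geq-\log 2$) with the budget $(n-s)m_0+s\,m_1\leq nm$. Up to that point you are on track.

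The gap is in the final assembly. You propose to case-split on which divergence realizes $D_{\mathrm{KL}}$ and then use ``the corresponding Wald bound together with the matching budget inequality.'' In the branch $D_{\mathrm{KL}}=D(P_1\|P_0)$ this forces you to control $\alpha$ through $M_1\leq nm/s$, which only yields $\alpha\gtrsim\exp\bigl(-(n/s)\,m\,D_{\mathrm{KL}}\bigr)$; making $(n-s)\alpha\geq\delta$ then requires $m\lesssim\frac{s}{n}\cdot\frac{\log n}{D_{\mathrm{KL}}}$, which is vanishingly small compared to $\log s/D_{\mathrm{KL}}$ in any sparse regime, so this branch cannot produce the stated threshold. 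The correct split --- the one the paper uses --- is on $\alpha$ versus $\beta$, not on the divergences: lower-bound \emph{both} $m_0$ and $m_1$ with the common denominator $D_{\mathrm{KL}}$ (legitimate since each individual divergence is at most $D_{\mathrm{KL}}$, and trivially true when a numerator is negative), substitute both into the full weighted budget $nm\geq(n-s)m_0+s\,m_1$, and note that when, say, $\alpha\leq\beta$ the two numerators $(n-s)(1-\alpha)\log\beta^{-1}$ and $s(1-\beta)\log\alpha^{-1}$ are each at least $(1-\beta)\log\beta^{-1}$ times their weight, so the weights recombine to give $m\geq\bigl(\log\tfrac{1}{2\beta}-\log 2\bigr)/D_{\mathrm{KL}}$ with no loss from the ratio $n/(n-s)$. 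This also resolves your secondary concern: the constant in the exponent is not absorbable bookkeeping. Discarding the $s\,M_1$ term to get $M_0\leq 2m$ costs a factor of $2$ in the exponent and would only prove the theorem with $2D_{\mathrm{KL}}$ in the denominator, which no longer matches the achievability results for the SPRT and Sequential Thresholding at the level of the leading constant --- and matching that constant is the point of the theorem.
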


\begin{proof}
See Appendix \ref{app:LB}.
\end{proof}

Thm. \ref{thm:LBSeq} establishes a lower bound on the expected number of samples needed to achieve a particular family wise error rate.  As the dimension of the problem grows, it provides us with a necessary condition for reliable recovery.  

\begin{cor} \label{STcor:LB} \emph{Limitations of sequential procedures.}
Assume $\lim_{n\rightarrow \infty} s/n = 0$. Any {uniform coordinate-wise} (sequential) procedure with
\begin{eqnarray} \nonumber
{\limsup_{n \rightarrow \infty} }\; \frac{m}{\log s} \leq \frac{1}{D(P_0||P_1)}
\end{eqnarray}
also has ${\liminf_{n\rightarrow \infty}} \P_e > 1/5$.  
\end{cor}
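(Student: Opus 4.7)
The plan is to specialize Theorem \ref{thm:LBSeq} to the asymptotic regime by optimizing over the free parameter $\delta$. For given $m$, $s$, and $D = D(P_0||P_1)$, the largest $\delta$ for which the hypothesis of the theorem is satisfied is $\delta^\star = \tfrac{s}{4}\, e^{-mD}$. Substituting back yields the distilled single-line inequality
\[
\P_e \;\geq\; 1 - \exp\!\left(-\tfrac{s}{4}\, e^{-m D(P_0||P_1)}\right),
\]
so it suffices to show that under the corollary's hypothesis, the inner exponent is eventually bounded below by a constant strictly exceeding $\log(5/4)$, since $1 - e^{-\log(5/4)} = 1/5$.

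Translating the limsup: $\limsup_n m/\log s \leq 1/D(P_0||P_1)$ gives, for large $n$, $m D(P_0||P_1) \leq \log s + o(\log s)$, hence $\tfrac{s}{4}\, e^{-m D(P_0||P_1)} \geq \tfrac{1}{4}\, e^{-o(\log s)}$. Provided the slack $m D(P_0||P_1) - \log s$ stays bounded in $n$ -- which is the intended reading of the limsup here, made precise via the sublinear sparsity assumption and the budget constraint (\ref{eqn:budget}) -- the exponent is bounded below by some constant arbitrarily close to $\tfrac{1}{4}$, and since $\tfrac{1}{4} > \log(5/4) \approx 0.223$, we obtain $\P_e > 1/5$ for all sufficiently large $n$, which gives $\liminf_n \P_e > 1/5$.

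The main obstacle is the asymptotic bookkeeping at the boundary $\limsup = 1/D(P_0||P_1)$: the non-strict limsup a priori allows the slack $m D(P_0||P_1) - \log s$ to grow like any positive $o(\log s)$ function, in which case the simple plug-in argument degrades. Handling this either requires a careful subsequence argument (restricting to $n$ along which the slack remains $O(1)$, which is permitted by the definition of limsup together with the budget constraint) or, cleaner, an interpretation of the hypothesis that reads the rate $\log s / D(P_0||P_1)$ as an $O(1)$-tight upper envelope on $m$. After that reduction, the conclusion is just the plug-in computation above, and the factor-of-two slack between $\tfrac{1}{4}$ and $\log(5/4)$ is what provides the strict inequality in $\liminf \P_e > 1/5$.
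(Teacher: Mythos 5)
Your overall strategy --- specialize Theorem \ref{thm:LBSeq} by choosing $\delta$ and checking that the resulting lower bound on $\P_e$ exceeds $1/5$ --- is the same as the paper's (the paper simply sets $\delta = 1/4$, giving $\P_e \geq 1 - e^{-1/4} > 1/5$ whenever $m \leq \log s / D_{\mathrm{KL}}$, rather than optimizing over $\delta$). But there is a genuine gap: Theorem \ref{thm:LBSeq} is stated with $D_{\mathrm{KL}} = \max\{D(P_0||P_1),\, D(P_1||P_0)\}$ in the denominator, while the corollary's hypothesis involves $D(P_0||P_1)$, which can be strictly smaller. Your plug-in computation silently substitutes $D(P_0||P_1)$ for $D_{\mathrm{KL}}$; since the corollary's hypothesis then admits values of $m$ for which the theorem's hypothesis fails, the theorem cannot be invoked as a black box. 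This mismatch is precisely where the assumption $\lim_{n\to\infty} s/n = 0$ is needed, and it is the one nontrivial step in the paper's proof: re-entering the proof of Theorem \ref{thm:LBSeq}, one has $m = ((n-s)m_0 + s\,m_1)/n$, and when $s/n \to 0$ the term $(n-s)m_0/n$ --- which is controlled by $D(P_0||P_1)$ alone --- dominates, so the finite-sample bound survives with $D_{\mathrm{KL}}$ replaced by $D(P_0||P_1)$. You instead invoke sublinearity only in connection with the slack $mD(P_0||P_1) - \log s$, which is not what it is for.

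On the boundary issue you raise: you are right that the non-strict limsup nominally allows $mD(P_0||P_1) - \log s$ to grow like a positive $o(\log s)$ quantity, in which case the optimized bound $1 - \exp\left(-\tfrac{s}{4}e^{-mD(P_0||P_1)}\right)$ degenerates. However, your proposed repair via a subsequence ``permitted by the definition of limsup'' does not work: if $m/\log s = 1/D(P_0||P_1) + 1/\sqrt{\log s}$ for every $n$, the limsup hypothesis holds with equality yet the slack diverges along every subsequence, and the budget constraint (\ref{eqn:budget}) gives no help since it is an assumption on $m$, not a consequence. To be fair, the paper's own proof does not resolve this either --- it implicitly reads the hypothesis as $m \leq \log s / D_{\mathrm{KL}}$ for all large $n$ --- so this is a defect of the statement you correctly flagged rather than one you introduced; but the specific fix you sketch is not valid as written.
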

\begin{proof}
Thm. \ref{thm:LBSeq} implies that if $m \leq \frac{\log s}{D_{\mathrm{KL}} }$ then $\P_e \geq 1 - e^{-1/4} > 1/5$.  Dividing by $\log s $ and taking the limit as $n \rightarrow \infty$ would give the lemma if $D_{\mathrm{KL}}  =D(P_0||P_1)$.  Instead, inspecting the proof of Thm. \ref{thm:LBSeq}, it is easily verified that if $\lim_{n\rightarrow \infty} s/n = 0$, the analysis follows with $D_{\mathrm{KL}}$  replaced by $D(P_0||P_1)$. 
\end{proof}
In words, if the number of samples per dimension grows at a rate slower than logarithmically in the level of sparsity, no procedure can reliably recover $\S$. In shorthand notation, if $m \leq \frac{\log s}{D(P_0||P_1)}$ then $\P_e$ can not be driven to zero, and recovery of $\S$ is unreliable in the large $n$ limit.  

%%%%%%%%%%%%%%%%%%%%%%%%%%%%%%%%%%%%%%%%%%
%%%%%%%%%%%%%%%%%%%%%%%%%%%%%%%%%%%%%%%%%%

\subsection{Limitation of Non-Sequential Procedures }
\label{sec:NonSeq}

Non-sequential methods, which sample each index a fixed number of times, can require significantly more measurements than sequential procedures.  In the following theorem we state a necessary condition on $m$ for any reliable  non-sequential procedure.  The proof is based on analysis of the  \emph{Chernoff Information} \cite{Cover:1991:EIT:129837}.
Our consideration is restricted to {non-sequential coordinate-wise procedures (which by definition sample each component $i = 1,\dots,n$ exactly $m$ times).}
%\begin{eqnarray} \nonumber
%\Gamma_{i,j} =\left\{
%           \begin{array}{ll}
%           1 & j\leq m \\
%           0 & j>m.
%           \end{array}
%           \right.
%\end{eqnarray}

\begin{thm} \emph{Limitation of non-sequential procedures.} \label{thm:NSasmp}
Assume $\lim_{n\rightarrow \infty} s/n = 0$. Any non-sequential {uniform coordinate-wise} procedure with
\begin{eqnarray} \label{eqn:asmpNS}
  {\limsup_{n\rightarrow \infty}} \ \frac{m}{\log n} < \frac{1}{D(P_1||P_0)}
\end{eqnarray}
also has
\begin{eqnarray} \nonumber
{\liminf_{n\rightarrow \infty} \ \mathbb{P}_e \geq 1/2.}
\end{eqnarray}
\end{thm}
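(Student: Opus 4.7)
The proof will reduce the family-wise error to a single per-coordinate Neyman-Pearson trade-off and then invoke a Chernoff-Stein-type converse. Because the procedure is non-sequential and uniform coordinate-wise, each coordinate is tested by a common rule applied to $m$ independent samples, with the tests at different coordinates using disjoint samples. Letting $\alpha$ and $\beta$ denote the per-coordinate Type I and Type II error probabilities of this common rule, independence across coordinates gives
\begin{equation*}
\P_e = 1 - (1-\alpha)^{n-s}(1-\beta)^{s}.
\end{equation*}

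Next I would argue by contradiction. Suppose $\liminf \P_e < 1/2$ along a subsequence; then $(1-\alpha)^{n-s}(1-\beta)^{s} > 1/2$ there, which in particular forces $(1-\beta)^s > 1/2$ and hence $\beta < 1 - (1/2)^{1/s} \leq 1/2$, so $\beta$ stays bounded away from $1$. If $m$ fails to diverge along a further subsequence, then any test on a bounded number of samples with $\beta \leq 1/2$ has $\alpha$ bounded below by a positive constant (depending only on $m$ and the fixed distributions), so $(1-\alpha)^{n-s} \to 0$ immediately contradicts the supposition. Otherwise $m \to \infty$, and the Chernoff-Stein converse (Cover \& Thomas Thm.~11.8.3 applied with the roles of $P_0$ and $P_1$ interchanged) yields
\begin{equation*}
\alpha \geq \exp\bigl(-m\, D(P_1||P_0)(1+o(1))\bigr).
\end{equation*}

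Finally, the hypothesis $\limsup m/\log n < 1/D(P_1||P_0)$ supplies some $\delta>0$ with $m\, D(P_1||P_0) \leq (1-\delta)\log n$ for all sufficiently large $n$, so the previous display becomes $\alpha \geq n^{-(1-\delta)(1+o(1))}$. Combined with $n - s \geq n/2$ (from $s \leq n/2$), this forces $(n-s)\alpha \to \infty$, hence $(1-\alpha)^{n-s} \leq \exp(-(n-s)\alpha) \to 0$, the desired contradiction. The crux is the asymmetric exponent $D(P_1||P_0)$ (rather than $D(P_0||P_1)$) in the Chernoff-Stein converse: this is exactly what flips the bound from the sequential $\log s / D(P_0||P_1)$ to the non-sequential $\log n / D(P_1||P_0)$, and it is the only nontrivial ingredient; the rest is bookkeeping against the factorized error expression.
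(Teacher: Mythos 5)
Your proof is correct and takes essentially the same route as the paper's: factorize $\mathbb{P}_e$ across coordinates into $1-(1-\alpha)^{n-s}(1-\beta)^{s}$, lower-bound the per-coordinate false positive probability via the converse to Stein's lemma with exponent $D(P_1||P_0)$ (the paper phrases this through the tilted distribution $P_\lambda$ and $D(P_\lambda||P_0)\leq D(P_1||P_0)$), and conclude that $(n-s)\alpha\to\infty$. The only substantive difference is your explicit handling of the bounded-$m$ case, a small gap the paper papers over by asserting $\lim_{n\to\infty}m=\infty$; otherwise the contradiction framing versus the paper's direct argument is cosmetic.
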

\begin{proof}
See Appendix \ref{app:NSMethods}.
\end{proof}

%%%%%%%%%%%%%%%%%%%%%%%%%%%%%%%%%%%%%%%%%%%%%%%%%%%%%%%%
%%%%%%%%%%%%%%%%%%%%%%%%%%%%%%%%%%%%%%%%%%%%%%%%%%%%%%%%

\section{Sequential Probability Ratio Testing}\label{sec:SPRT}
\subsection{The SPRT}

Provided $P_0$ and $P_1$ are known, sequential probability ratio tests are optimal for binary hypothesis tests in terms of minimizing the expected number of measurements for any error probabilities (shown originally in \cite{1948}); this optimality translates to the high dimensional case by simply considering $n$ parallel SPRTs.  

Each individual SPRT operates by continuing to measure a component if the corresponding likelihood ratio is within an upper and lower threshold, and terminating measurement otherwise.  For scalar thresholds $\gL$ and $\gU$, the procedure is defined as
\begin{eqnarray} \label{eqn:SPRT:RULES}
  \Gamma_{i,j'+1} &=& \left\{
                      \begin{array}{ll}
                        1 &\mathrm{if} \quad  \gL \leq \prod_{j=1}^{j'} \frac{P_1(Y_{i,j} )}{P_0(Y_{i,j})} \leq \gU \\
                        0 &\mathrm{else}
                      \end{array}
                    \right. 
\end{eqnarray}
where $\prod_{j=1}^{j'} \frac{P_1(Y_{i,j} )}{P_0(Y_{i,j})}$ is the likelihood ratio comprised of all prior samples.  If the likelihood ratio falls below $\gL$, the SPRT labels index $i$ as \emph{not} belonging to $\hat{\S}$; if the likelihood ratio exceeds $ \gU$, index $i$ is assigned to $\hat{\S}$.  Equivalently, the test can be implemented in the $\log$-likelihood domain, and $L_i^{(j')}$ can be compared against $\log (\gL)$ and $\log(\gU)$.  The procedure requires a random number of samples of each component,  denoted $J_i$, and defined as
\begin{eqnarray} \nonumber
J_i := \min \{j : \Gamma_{i,j+1} = 0\}.   
\end{eqnarray}

As we proceed we make a minor assumption on the distribution of the log-likelihood statistic. Specifically, the ensuing theorem and proof require existence of positive constants $C_1$ and $C_2$ such that 
\begin{eqnarray}   \nonumber
&& \mathbb{E}[L_i^{(J_i)} | L_i^{(J_i)} < \log \gL ] \geq \log \gL - C_1 \\
&&\mathbb{E} [L_i ^{(J_i)} | L_i^{(J_i)} > \log \gU] \leq \log \gU + C_2\label{STeqn:ass1}
\end{eqnarray}
for all thresholds $\gU$ and $\gL$.   In some cases, bounds for $C_1$ and $C_2$ are known (see \cite{1945Wald}, p.145, where explicit expressions for the Bernoulli and Gaussian case are given).   In words, the requirement is the existence of a constant that bounds the expected value of the $\log$-likelihood ratio when the procedure terminates, regardless of the value of the threshold.  
%This requirement can be related to the tail distribution of the log-likelihood ratio test statistic. To be specific, we require that
%\begin{eqnarray} \label{STeqn:ass1}
%\max_{r\geq 0} \; \E\left[L - r |L \geq r\right] < \infty
%\end{eqnarray}
%and 
%\begin{eqnarray} \label{STeqn:ass2}
%\min_{r\geq 0} \; \E\left[L + r |L \leq -r \right] > -\infty
%\end{eqnarray}
%where $L = \frac{P_1(Y)}{P_0(Y)}$ is the log likelihood test statistic comprised of a single sample. 
 This condition is satisfied when $L_i^{(1)}$ follows any bounded distribution, Gaussian distributions, exponential distributions, among others.  It is not satisfied by distributions with infinite variance or polynomial tails.  A more thorough discussion of this restriction is studied in \cite{SPRTbounds1960}.

\begin{thm} \emph{Ability of the SPRT.} \label{thm:SPRTach}
The SPRT procedure with thresholds $\gL = \frac{1}{s^{1+\epsilon}}$ and $\gU = {(n-s)}^{1+\epsilon}$, any $\epsilon >0$, has 
\begin{eqnarray} \nonumber
\lim_{n\rightarrow \infty} \P_e = 0 
\end{eqnarray}
and 
\begin{eqnarray} \nonumber
  \lim_{n\rightarrow \infty} \frac{m}{\log s } \leq \frac{1+\epsilon}{D(P_0||P_1)}.
\end{eqnarray}
provided $s< n /\log n$, and the condition in (\ref{STeqn:ass1}) is satisfied.  
\end{thm}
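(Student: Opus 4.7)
The plan is to decouple the two claims—error probability and sample complexity—and handle each using classical Wald-style SPRT analysis applied in parallel to the $n$ coordinate-wise tests, then combine via a union bound and Wald's identity respectively.

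For the error probability, I would first invoke the standard single-SPRT bounds: for any individual test, the probability of declaring $i\in\hat{\Sc}$ given $i\notin\Sc$ is at most $1/\gU$, and the probability of declaring $i\notin\hat{\Sc}$ given $i\in\Sc$ is at most $\gL$ (these follow from a change-of-measure argument using the likelihood ratio as a martingale under $P_0$). Substituting $\gL = s^{-(1+\epsilon)}$ and $\gU = (n-s)^{1+\epsilon}$ and applying a union bound over all coordinates gives
\begin{eqnarray} \nonumber
\P_e \;\leq\; (n-s)\cdot\frac{1}{(n-s)^{1+\epsilon}} \;+\; s\cdot\frac{1}{s^{1+\epsilon}} \;=\; \frac{1}{(n-s)^{\epsilon}} + \frac{1}{s^{\epsilon}},
\end{eqnarray}
which tends to $0$ provided $s\to\infty$ (and otherwise one can verify the miss side is $0$ directly from the finite-$s$ case). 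Under the sparsity assumption $s<n/\log n$ we also have $n-s\to\infty$, so both terms vanish.

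For the sample size, I would apply Wald's identity separately under the null and the alternative. Under $P_0$, $E[L_i^{(J_i)}\mid i\notin\Sc] = -D(P_0\Vert P_1)\,E[J_i\mid i\notin\Sc]$; expanding the expectation over the two stopping events and using (\ref{STeqn:ass1}) to control the overshoot together with the miss-side bound $P(L_i^{(J_i)}>\log\gU\mid i\notin\Sc)\le 1/\gU$ yields
\begin{eqnarray} \nonumber
E[J_i\mid i\notin\Sc] \;\leq\; \frac{(1+\epsilon)\log s + C_1 + o(1)}{D(P_0\Vert P_1)}.
\end{eqnarray}
The symmetric calculation under $P_1$, using the other direction of (\ref{STeqn:ass1}) and $P(L_i^{(J_i)}<\log\gL\mid i\in\Sc)\le\gL$, gives
\begin{eqnarray} \nonumber
E[J_i\mid i\in\Sc] \;\leq\; \frac{(1+\epsilon)\log(n-s) + C_2 + o(1)}{D(P_1\Vert P_0)}.
\end{eqnarray}
Averaging with the correct coordinate weights yields
\begin{eqnarray} \nonumber
m \;=\; \frac{1}{n}\,E\!\left[\sum_i J_i\right] \;\leq\; \frac{n-s}{n}\cdot\frac{(1+\epsilon)\log s + C_1}{D(P_0\Vert P_1)} \;+\; \frac{s}{n}\cdot\frac{(1+\epsilon)\log(n-s)+C_2}{D(P_1\Vert P_0)}.
\end{eqnarray}

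The main obstacle—and the reason the sparsity hypothesis $s<n/\log n$ appears—is showing that the second (alternative-side) term is asymptotically negligible relative to $\log s$ even though $\log(n-s)$ can be much larger than $\log s$. The assumption forces $s/n < 1/\log n$, hence $(s/n)\log(n-s) < 1$, so after dividing by $\log s$ the second term is $O(1/\log s)=o(1)$, while the $(n-s)/n$ prefactor on the first term tends to $1$. This yields $\limsup m/\log s \le (1+\epsilon)/D(P_0\Vert P_1)$, as claimed. A secondary technical point is justifying Wald's identity at the random stopping time $J_i$; the overshoot condition (\ref{STeqn:ass1}) is precisely what is needed to ensure finiteness of $E[J_i]$ and to quantify the $O(1)$ correction from the excess over the boundary.
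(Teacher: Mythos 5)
Your proposal is correct and follows essentially the same route as the paper's proof: the standard Wald bounds $\alpha\le\gU^{-1}$, $\beta\le\gL$ with a union bound for the error probability, and Wald's identity combined with the overshoot condition (\ref{STeqn:ass1}) and the observation that $s<n/\log n$ kills the alternative-side term $\frac{s}{n}\log(n-s)$ for the sample complexity. You correctly identify the role of every hypothesis, and your parenthetical caution about the case of bounded $s$ is, if anything, slightly more careful than the paper's own treatment.
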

\begin{proof}
See Appendix \ref{app:SPRT}.
\end{proof}

\subsection{Implementation Issues} \label{sec:SPRTimp} Implementing an SPRT on each component can be challenging for many problems encountered in practice.  While the SPRT is optimal when both $P_0$ and $P_1$ are known and testing a single component amounts to a simple binary hypothesis test, scenarios often arise where some parameter of distribution $P_1$ is unknown.  When some parameter of $P_1$ is unknown, the likelihood ratio cannot be formed, and sufficient statistics for the likelihood ratio result in adjustments to the thresholds based on the unknown parameters of distribution $P_1$.  With incorrect thresholds, the SPRT is no longer optimal.  To see this more concretely, consider a problem where $P_1$ is a normal distribution with an unknown positive mean $\mu$ and unit variance, and $P_0$ is a zero mean standard normal distribution.  Here, the SPRT procedure continues to sample a particular index if 
\begin{eqnarray}
\frac{\log \gL}{\mu} + \frac{j' \mu}{ 2}  \leq  \sum_{j=1}^{j'}  Y_{i,j}  \leq \frac{\log \gU}{ \mu} + \frac{j' \mu}{2},
\end{eqnarray}
equivalent to (\ref{eqn:SPRT:RULES}). While the statistic  $\sum_{j=1}^{j'}  Y_{i,j}$ does not depend on the unknown parameter $\mu$, the thresholds do.  If the test is implemented with an incorrect value of $\mu$, it may result in large sample sizes.
%continue to sample an index \emph{ad infinitum} 
%(the so-called \emph{open continuation region} \cite{SeqAnalysis}). 
This occurs when $\mu$ is overestimated; for illustration, consider a scenario in which the threshold is set erroneously using $\tilde{\mu} = 2 \mu$, where $\mu$ is the  true mean of $P_1$.  If $P_1$ is the true distribution, this test is then equivalent to waiting for an unbiased random walk to cross a constant threshold, resulting in arbitrarily large sample size.  {Conversely, if $\mu$ is underestimated, the false negative rate of the test becomes arbitrarily large.  This issue lead researchers to study the optimality of the SPRT with an \emph{indifference} region \cite{schwarz1962asymptotic}.   }

{To address these and other deficiencies of the SPRT, a number of tests have been proposed for composite tests in the monotone likelihood setting (see Sec. \ref{sec:IntroPrior} and \cite{lorden19762, schwarz1962asymptotic, lai1988nearly}); we note that these procedures may achieve the lower bound as stated in Theorem \ref{thm:LBSeq}  (although we do not investigate this further here).   Motivated by real world experiments in biology, we present a procedure specific to the sparse setting, termed Sequential Thresholding.  Sequential Thresholding can be interpreted as a truncated SPRT without memory.  Sequential Thresholding is simple to implement and analyze, and provides optimal theoretical guarantees for sparse signals.   }

%Perhaps most notably, in contrast to the SPRT based procedure, sequential thresholding does not aggregate statistics.  Roughly speaking, this results in increased robustness to modeling errors in $P_1$ at the cost of a sub-optimal procedure. 
%%%%%%%%%%%%%%%%%%%%%%%%%%%%%%%%%%%%%%%%%%%%%%%%%%%%%%%%%%%
%%%%%%%%%%%%%%%%%%%%%%%%%%%%%%%%%%%%%%%%%%%%%%%%%%%%%%%%%%%

\section{Sequential Thresholding} \label{sec:SeqThres}
Sequential Thresholding is based on simple idea: repeatedly reduce the dimension of the problem by sequentially eliminating elements that exhibit strong evidence they don't belong to $\S$.  Sequential Thresholding consists of a series of $K$ measurement steps, where each step eliminates from consideration a proportion of the components measured on the prior step.  After the last step, the procedure terminates, and the remaining components are taken as the estimate of $\S$.  

To illustrate the main idea behind the procedure, we first introduce a simplified version of Sequential Thresholding and analyze the simplified procedure for a specific problem.  This \emph{simple} Sequential Thresholding, while not achieving asymptotic optimality,  does admit a simple error analysis.  The more general version of Sequential Thresholding, which does achieve optimality and the lower bound of Cor. \ref{STcor:LB}, is presented in the second half of this section.

\subsection{Example of Simple Sequential Thresholding}
To highlight the main idea behind Sequential Thresholding, and the potential performance gains,  consider a problem where $P_0 \sim {\cal{N}}(0,1)$ and $P_1 \sim {\cal{N}}(\theta,1)$ for some $\theta > 0$.  
{The \emph{simple} Sequential Thresholding procedure requires two inputs: \emph{1)} $\delta >0$, which represents the desired error probability, and \emph{2)} an even integer $m \geq 2$ that defines the average number of samples per index, and hence the total budget.}  On the first step the procedure samples all indices $m/2$ times each, for all $i$, requiring $mn/2$ samples.  These $m/2$ samples are summed for each index; let $T_i$ denote this sum.  If $T_i$ is less than zero, that particular index is not sampled on subsequent passes.
This eliminates a proportion (approximately half) of components following the null distribution (since the median of $T_i$ for $i \not \in S$ is zero).  Indices that exceed the threshold, i.e. $\{i : T_i > 0\}$, are sampled on the subsequent step. This process continues for $K\approx \log_2 n$ steps.  At each step, $T_i$ is defined as the sum of the $m/2$ measurements of index $i$ on that step  (we suppress dependence on the step for ease of notation).  After the $K$th step, the procedure terminates, and estimates $\S$ as the set of indices that have not been eliminated from consideration. Roughly speaking, provided $s \ll n$, the procedure reduces the number of samples taken on each step by half as most components follow the null, which is zero mean.  The total number of samples required by the procedure on all steps {is approximately $\frac{m}{2}\left(n + \frac{n}{2} + \frac{n}{4} +\dots \right) \approx mn$ on average, implying approximately $m$ samples per dimension.}

\begin{algorithm}[h]
\caption{Simple Sequential Thresholding}
\begin{algorithmic} \label{alg:simplesds} 
\STATE{input: desired error probability $\delta$, budget $m \geq 2$}
\STATE{initialize: $\S_1 = \{1,\dots,n\}$, $K = \left \lceil \log_2\left(\frac{2n}{\delta} \right)\right \rceil$ steps}
\FOR{$k = 1,\dots,K$}
\FOR{$ i \in \S_{k}$}
\STATE{{\bf measure}: sample $m/2$ times $Y_i$, denote $T_i$ the sum of \hspace*{.6cm}  these samples }
\STATE{{\bf threshold}: $\S_{k+1} \ := \ \{i\in\S_{k} \ :  T_i > 0 \}$}
\ENDFOR
\ENDFOR
\STATE{output: $\S_{K+1}$} 
\end{algorithmic}
\end{algorithm}

{
\begin{thm} \emph{Reliability of Simple Sequential Thresholding.}
Consider the setting above where $P_0 \sim \mathcal{N}(0,1)$ and $P_1 \sim \mathcal{N}(\theta, 1)$. The simple Sequential Thresholding algorithm with input $\delta >0$ satisfies $ \P_e < \delta$
provided
\begin{eqnarray} \nonumber
m >\frac{\log s + \log\log_2 \left( \frac{2n}{\delta} \right) + \log \left(\frac{1}{\delta}\right)}{\theta^2/4}.
\end{eqnarray}
\end{thm}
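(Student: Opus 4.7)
The plan is to decompose the error event $\{\hat{\S}\neq\S\}$ into two disjoint types: (a) at least one true signal $i\in\S$ is eliminated at some step $k\in\{1,\dots,K\}$, and (b) at least one null index $i\notin\S$ survives all $K$ steps. A union bound then reduces the task to controlling each type by $\delta/2$. The key observation enabling the analysis is that the sample sets used on different steps are independent and disjoint, so the $T_i$ statistics across steps are independent, and the null-survival events are especially clean.

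For type (b), fix any $i\notin\S$. At each step $k$, the freshly drawn statistic $T_i$ is the sum of $m/2$ i.i.d.\ $\mathcal{N}(0,1)$ samples, so $T_i\sim\mathcal{N}(0,m/2)$, which is symmetric about zero. Hence $\P(T_i>0)=1/2$ exactly, and by independence across steps the probability that $i$ survives all $K$ steps is $(1/2)^K$. Choosing $K=\lceil\log_2(2n/\delta)\rceil$ makes this at most $\delta/(2n)$, and a union bound over the at most $n$ null indices yields the desired $\delta/2$.

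For type (a), fix any $i\in\S$ and any step $k$. Then $T_i\sim\mathcal{N}(m\theta/2,\,m/2)$, so the standard Gaussian Chernoff bound gives
\begin{equation}\nonumber
\P(T_i\le 0)=\Phi\!\left(-\theta\sqrt{m/2}\right)\le\tfrac12\exp(-m\theta^2/4).
\end{equation}
A union bound over the $s$ signals and the $K$ steps gives
\begin{equation}\nonumber
\P(\text{any signal eliminated})\le \tfrac{sK}{2}\exp(-m\theta^2/4).
\end{equation}
Substituting the hypothesis $m>\bigl(\log s+\log\log_2(2n/\delta)+\log(1/\delta)\bigr)\big/(\theta^2/4)$ and the bound $K\le\log_2(2n/\delta)+1$ produces an upper bound of $\delta/2$ (up to an immaterial constant absorbed by the strict inequality on $m$). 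Summing the type (a) and type (b) contributions gives $\P_e<\delta$.

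The proof is essentially a combination of two tail bounds and two union bounds, so no step is conceptually hard. The only nuisance is matching the ceiling $K=\lceil\log_2(2n/\delta)\rceil$ against the cleaner $\log_2(2n/\delta)$ appearing in the statement, which can be handled by noting that the strict inequality on $m$ gives a small slack that absorbs the constants; verifying that the independence across steps is genuine (because each step draws fresh samples from the not-yet-eliminated indices) is the only place where one must be careful.
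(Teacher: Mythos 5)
Your proof is correct and follows essentially the same route as the paper's: the same decomposition into false positives and false negatives, the same use of independence of the fresh per-step statistics together with the exact symmetry $\P(T_i>0)=1/2$ for nulls, and the same Gaussian tail bound plus union bound over the $s$ signals and $K$ steps. The only difference is that you are slightly more careful about the ceiling in $K$ (the paper silently bounds $K\le\log_2(2n/\delta)$), which is an immaterial bookkeeping point.
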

\begin{proof}
From a union bound, 
\begin{eqnarray} \label{STeqn:nsbeta}
\P_e\leq (n-s) \P\left(i \in \hat{\S}| i \not \in \S\right) + s \; \P\left(i \not \in \hat{S}| i \in \S \right).
\end{eqnarray}
The false positive event occurs when, for $i\not \in \S$, the index survives all $K$ thresholding steps.  Recall $T_i$ denotes the sum of the $m/2$ samples from any particular step.  By the independence across steps, and since the median of $T_i$  for $i \in \S$ is zero,
\begin{eqnarray}  \label{eqn:STalpha11}
\P\left(i \in \hat{\S}| i \not \in \S\right) = \left(\frac{1}{2}\right)^K \leq \frac{\delta}{2n}.
\end{eqnarray}
The false negative event occurs when for some $i \in \S$, $T_i$ falls below zero on any of the $K$ steps. Applying a union bound and Gaussian tail bound, since $T_i \sim \mathcal{N}\left(m\theta/2, m/2 \right)$, we have
\begin{eqnarray} \nonumber
\P\left(i \not \in \hat{\S}| i \in \S \right) &\leq& \frac{K}{2} \exp\left(-m\theta^2/4 \right)\\ \nonumber
 &\leq& \frac{1}{2}\log_2 \left(\frac{2n}{\delta}\right) \exp\left(-\frac{m\theta^2}{4} \right)  \\
 &\leq & \delta/2
 \label{eqn:STbeta11}
\end{eqnarray}
where the last line follows by asserting 
\begin{eqnarray}
m\geq  \frac{\log s + \log \log_2 \frac{2n}{\delta} + \log \frac{1}{\delta}  }{\theta^2/4}.
\end{eqnarray}
 Combining (\ref{STeqn:nsbeta}), (\ref{eqn:STalpha11}) and (\ref{eqn:STbeta11})  gives $\P_e\leq \delta$.
\end{proof}}

{
The simple Sequential Thresholding procedure requires order $\log s + \log \log n$ samples per dimension.}  While the sub-optimal simple version of Sequential Thresholding does not achieve the lower bound of Theorem \ref{thm:LBSeq}, it does out-perform non-sequential procedures. If $m$ is order  $\log s + \log \log n$, the procedure is reliable.  On the other hand, Sec. \ref{sec:NonSeq} shows reliable recovery with non-sequential methods require $m$ to be order $\log n$.  For large $n$ and small $s$, $\log n$ can be significantly larger than $\log s + \log \log n$, implying that the simplified version of Sequential Thresholding, for sufficiently sparse problems, will succeed with fewer samples than any non-sequential procedure.  Note that the simple version of Sequential Thresholding is a uniform coordinate-wise procedure and the lower bound of Theorem \ref{thm:LBSeq} can be reasonably compared.

%For the example above, the sampling procedure is entirely characterized by the following.  First set $\Gamma_{i,1} = 1$ for all $i = 1,\dots,.n$. Then for $k = 1,\dots,K$, let 
%\begin{eqnarray} \nonumber
 % \Gamma_{i,k+1} = \left\{
 %               \begin{array}{ll}
 %                 1 & Y_{i,k} > \gamma \\
%                  0 & Y_{i,k} \leq \gamma \: \: \mathrm{or} \: \: \Gamma_{i,k} = 0 .
%                \end{array}
%              \right.
%\end{eqnarray}
%The sampling procedure terminates after $K$ steps, and the estimate of $\S$ is given as $\hat{\S} = \{i:\Gamma_{i,K+1} =1\}$.

{
\subsection{Implementation} \label{sec:ImpST}
One of the main attributes of Sequential Thresholding is that it can be implemented in certain scenarios with limited knowledge of distribution $P_1$; namely, when the underlying testing problem satisfies a monotone likelihood ratio assumption \cite{karlin1956theory}. If there exists a test statistic that is a monotonic transform of the likelihood ratio, regardless of any unknown parameters of $P_1$, then the test can be implemented.  This scenario can arise when testing a one-sided composite hypothesis against a simple alternative.
While apparent in the example of simple Sequential Thresholding above, in which two normal distributions are compared, more generally the monotone likelihood ratio assumption arises when the sparse alternative corresponds to a parametric family of densities.  To be precise, consider the following definition.
\begin{definition}{ \label{def:MLR} \emph{Monotone Likelihood Ratio Assumption.}}
$P_0(Y_i)$ is known and $P_1( Y_i ; \theta)$ is defined by a parametric family of distributions with an unknown parameter $\theta \in {\Theta}$.  The family $\{P_1( Y_i ; \theta)\}_{\theta\in\Theta}$ is said to be a monotone likelihood ratio family with respect to $P_0$ in the scalar statistic $T_i^{(\ell)} =T_i^{(\ell)}(Y_{i,1},\dots,Y_{i,\ell})$ if the log likelihood ratio $\sum_{j=1}^\ell \log\left(\frac{P_1(Y_{i,j};\theta)}{P_0(Y_{i,j})}\right)$
is strictly monotonic increasing in $T_i^{(\ell)}$ for all $\theta \in {\Theta}$.
\end{definition}
%; (\ref{eqn:rho}) can be written in terms of $T_i$.   As the test statistic and the thresholds, $\gamma_k$, depend only on $P_0$, the procedure can be implemented without knowledge of $\theta$. 
In addition to being satisfied in the trivial case where $P_0$ and $P_1$ are fully specified, the monotone likelihood ratio assumption holds  in the conventional setting where  $P_0$ and $P_1$ belong to a common one-parameter family of distributions with a monotone likelihood function.  For example, in exponential families the test statistic $T_i^{(\ell)}$ is the sufficient statistic for $\theta$.
This property is well illustrated by the example of testing two Gaussian distributions discussed above.  If we assume the null distribution is known, but the larger mean of $P_1$ is unknown,  the procedure can still be implemented.  The sum of the measurements, $\sum_j Y_{i,j}$, is a sufficient statistic (whose distribution under $P_0$ of course does not depend on $P_1$).   For additional examples in which the underlying test satisfies the monotone likelihood ratio assumption, see \cite{karlin1956theory} and references therein.
%This implies that $\gamma_k$ does not depend on $P_1$, and thus, the procedure can be implemented without this knowledge.  
}

%{
%Sequential Thresholding below is presented and analyzed using the log-likelihood ratio, $L_i$, for simplicity, with the understanding that the test can be implemented, and the theoretical guarantees hold,  if there exists a tests statistic $T_i$ as discussed above.  }

\subsection{Details of Sequential Thresholding}
While the previous discussion highlighted the main principle behind Sequential Thresholding, the procedure becomes slightly more complicated in its full generality.  To show the procedure achieves the lower bound of Cor. \ref{STcor:LB} as $n$ grows large, both the allocation of measurements across steps and the proportion of null components discarded on each step must be adjusted. 

%{
%Sequential Thresholding, as discussed above, requires either access to the likelihood ratio, or a test statistic that is a monotonic transform of the likelihood ratio.  Let $T_i^{(\ell)}$ be one such statistic comprised of $\ell$ samples of $Y_i$:
%\begin{eqnarray}
%T_i^{(\ell)} = g(L_i^{(\ell)})
%\end{eqnarray}
%where $g( \cdot)$ is any monotonically increasing function of the input, and $L_i^{(\ell)}$ is the log-likelihood ratio defined in (\ref{eqn:LR}). }

In general, Sequential Thresholding requires three inputs: \emph{1)} $\delta$, the desired family wise error rate, \emph{2)} a constant $\rho \in [1/2,1)$ representing the proportion of null components discarded on each step, and $\emph{3)}$ a total measurement budget $mn$ (meaning an average of $m$ samples per dimension).  
{Assume $P_0$, $P_1$, and a test statistic $T_i^{(\ell)}$ satisfy the monotone likelihood ratio assumption (Def. \ref{def:MLR}), and assume the procedure has exact knowledge of $s$ to facilitate analysis\footnote{In practice, the procedure is fairly insensitive to knowledge of $s$.  Specifically, when $s$ is underestimated, it is straightforward to see that while the procedure may exceed the measurement budget, it will also have a decreased family wise error rate.  Likewise, underestimating $s$ will result in a small increase in the family wise error rate, but also a decrease in the total expected number of samples.}.  The minimum expected proportion of null components discarded on each step, $\rho$, is fixed throughout the procedure and is used to define the series of thresholds as
\begin{eqnarray} \label{eqn:rho}
{
\min\left\{\gamma_k :  \P(T_i^{(m_k)} \leq \gamma_k |i \not \in \S ) \geq \rho\right\}.
}
\end{eqnarray}
In words, the thresholds are set so that at least a proportion $\rho$ of the null components are discarded, in expectation, on each step. 
Here, $m_k$ is the number of samples of any index measured on step $k$.  As $m_k$ is a function of the step index, so is the threshold $\gamma_k$. }

With $\rho$ and $\delta$ as inputs, and a total expected measurement budget $mn$, Sequential Thresholding operates as follows.  Let $\S_{k}$ denote the subset of $\{1,\dots,n\}$ comprised of components still under consideration at step $k$.  The procedure first initializes by setting $\S_1 = \{1,\dots,n\}$ and defining 
{
\begin{eqnarray}
K =  \left \lceil \log_\frac{1}{1-\rho} \left( \frac{2(n-s)}{\delta} \right) \right \rceil
\end{eqnarray}
where $\lceil x \rceil$ denotes smallest integer greater than or equal to $x$.
For steps $k = 1,\dots,K$, the procedure proceeds as follows.  On step $k$, each component in $\S_k$ is sampled $m_k$ times.  The number of samples taken on step $k$ is defined as  
\begin{eqnarray} \label{eqn:mk}
m_k =\left \lfloor m \: k \: \rho^2 \left( \frac{n}{n+sK^2}\right) \right \rfloor
\end{eqnarray}
where $\lfloor x \rfloor$ indicates the largest integer smaller than or equal to $x$.
The procedure then compares the test statistic comprised of the $m_k$ samples to the threshold defined in (\ref{eqn:rho}) and includes only the indices that exceed the threshold in the set of components to be sampled on the following step:
\begin{eqnarray} \nonumber
\mathcal{S}_{k+1} = \left\{ i \in \mathcal{S}_{k}: T_{i}^{(m_k)} > \gamma_k\right\}
\end{eqnarray}
where $\gamma_k$ is defined in (\ref{eqn:rho}).
In words, if $T_{i}^{(m_k)}$ is below $\gamma_k$, no further measurements of component $i$ are taken for the remainder of the procedure.  Otherwise, component $i$ is measured on the subsequent step.  By definition of $\gamma_k$, approximately $\rho$ times the number of remaining components following $P_0$ will be eliminated on each step; if $s \ll n$, each thresholding step eliminates approximately $\rho$ times the total number of components remaining.  After step $K$, the procedure terminates and estimates $\S$ as the indices still under consideration: $\hat{\S} = \S_{K+1}$.  The procedure is detailed in Alg. \ref{alg:sds}.
}

%\begin{algorithm}[h]
%\caption{Sequential Thresholding}
%\begin{algorithmic} \label{alg:sds}
%\STATE{input: $K =  \left \lceil \log_\frac{1}{1-\rho} \left( \frac{2(n-s)}{\delta} \right) \right \rceil$ steps, $\rho \in [1/2,1)$, maximum budget $m$}
%\STATE{initialize: $\S_1 = \{1,\dots,n\}$}
%\FOR{$k = 1,\dots,K$}
%\FOR{$ i \in \S_{k}$}
%\STATE{{\bf measure}: $\{Y^{(k)}_{i,j}\}_{j=1}^{ m_k} \sim \left\{
%           \begin{array}{ll}
%            \prod_{j =1}^{m_k} P_0(Y_{i,j}^{(k)}) & i \not \in \mathcal{S} \\
%            \prod_{j =1}^{m_k} P_1(Y_{i,j}^{(k)}) & i \in \mathcal{S}
%           \end{array}
%        \right.   $ where $m_k$ is given in (\ref{eqn:mk})}
%\STATE{{\bf threshold}: {$\S_{k+1} \ := \ \{i\in\S_{k} \ : T_{i,k}^{(m_k)} > \gamma_k \}$ }}
%\ENDFOR
%\ENDFOR
%\STATE{output: $\S_{K+1}$}
%\end{algorithmic}
%\end{algorithm}

\begin{algorithm}[h]
\caption{Sequential Thresholding}
\begin{algorithmic} \label{alg:sds}
\STATE{input: desired error probability $\delta$, budget $m$, $\rho \in [1/2,1)$}
\STATE{initialize: $\S_1 = \{1,\dots,n\}$, $K =  \left \lceil \log_\frac{1}{1-\rho} \left( \frac{2(n-s)}{\delta} \right) \right \rceil$}
\FOR{$k = 1,\dots,K$}
\FOR{$ i \in \S_{k}$}
\STATE{{\bf measure}: {sample $m_k =\left \lfloor m \: k \: \rho^2 \left( \frac{n}{n+sK^2}\right) \right \rfloor$ times $Y_i$, \hspace*{.6cm} compute scalar statistic $T_i^{(m_k)}$ }}
\STATE{{\bf threshold}: {$\S_{k+1} \ := \ \{i\in\S_{k} \ : T_{i}^{(m_k)} > \gamma_k \}$ }}
\ENDFOR
\ENDFOR
\STATE{output: $\S_{K+1}$}
\end{algorithmic}
\end{algorithm}

\subsection{Ability of Sequential Thresholding}
For fixed $P_0$ and $P_1$ belonging to a class of distributions satisfying  Definition 5, the following theorem and corollary relate $(n,s,m)$ to the family wise error rate of the procedure.

\begin{thm} \emph{Finite sample performance of Sequential Thresholding}. \label{thm:STach}
Consider Sequential Thresholding defined in Alg. \ref{alg:sds}.   Provided 
\begin{eqnarray} \nonumber
m \geq  \frac{\log s + \log \delta^{-1} +\log 4 }{c_n} 
\end{eqnarray}
then 
\begin{eqnarray} \nonumber
\P_e \leq \delta
\end{eqnarray}
where  
\begin{eqnarray} \label{eqn:cn}
c_n &=&
 \rho^2 \left( \frac{n}{n+sK^2} \right) \times \\ \nonumber && 
 \left(D(P_0||P_1) - \sqrt{\frac{\sigma^2(P_0||P_1)}{\left( \frac{\rho^2  n \log s}{{D(P_0||P_1)} (n+sK^2)} -1\right)  \left(1-\rho\right)}}\right)
\end{eqnarray}
and is assumed to be positive.  %satisfies $\lim_{n\rightarrow \infty} c_n = \rho^2 D(P_0||P_1)$  for any fixed $\rho \in [1/2,1)$. 
%Here, $\sigma^2(P_0||P_1) =  \mathrm{var}\left( \left. \log \frac{P_0(Y_i)}{P_1(Y_i)} \right \vert i \not \in \mathcal{S}\right)$.
\end{thm}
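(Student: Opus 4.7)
The plan is to decompose the family wise error into false positives and false negatives, bound each separately, and then combine.

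First, by a union bound $\P_e \leq (n-s)\alpha + s\beta$, where $\alpha = \P(i \in \hat\S \mid i \notin \S)$ and $\beta = \P(i \notin \hat\S \mid i \in \S)$; by symmetry these per-coordinate rates depend on $i$ only through whether $i \in \S$. For the false positive rate, the key point is that each step uses fresh samples and, by the definition of $\gamma_k$ in \eqref{eqn:rho}, a null component survives step $k$ with probability at most $1 - \rho$. Independence across steps gives $\alpha \leq (1-\rho)^K$, and the choice $K = \lceil \log_{1/(1-\rho)}(2(n-s)/\delta)\rceil$ immediately yields $(n-s)\alpha \leq \delta/2$. It then suffices to show $s\beta \leq \delta/2$ under the hypothesis on $m$.

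For the false negative rate I would use the monotone likelihood ratio assumption to translate each thresholding event back into the log-likelihood domain: by Def.~\ref{def:MLR} there exists $L_k^*$ such that $\{T_i^{(m_k)} \leq \gamma_k\} = \{L_i^{(m_k)} \leq L_k^*\}$. A change of measure then gives the clean bound
\begin{equation} \nonumber
\P_1\!\left(L_i^{(m_k)} \leq L_k^*\right) = \E_0\!\left[e^{L_i^{(m_k)}}\, \mathbf{1}_{L_i^{(m_k)} \leq L_k^*}\right] \leq e^{L_k^*}.
\end{equation}
To bound $L_k^*$ itself, I would use the construction of $\gamma_k$: since $\P_0(L_i^{(m_k)} \leq L_k^*) \geq \rho$, and since under $P_0$ the random variable $L_i^{(m_k)}$ has mean $-m_k D(P_0\|P_1)$ and variance $m_k \sigma^2(P_0\|P_1)$, Chebyshev's inequality yields
\begin{equation} \nonumber
L_k^* \leq -m_k D(P_0\|P_1) + \sqrt{\tfrac{m_k \sigma^2(P_0\|P_1)}{1-\rho}}.
\end{equation}
Combining these two bounds with a union bound over the $K$ steps gives
\begin{equation} \nonumber
\beta \leq \sum_{k=1}^{K}\exp\!\Bigl(-m_k D(P_0\|P_1) + \sqrt{\tfrac{m_k\sigma^2(P_0\|P_1)}{1-\rho}}\Bigr).
\end{equation}

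Finally, I would substitute $m_k \geq m k \rho^2 n/(n+sK^2) - 1$ from \eqref{eqn:mk}. Because $m_k$ is increasing in $k$, the exponent is most negative at $k=1$ and the remaining terms decay geometrically, so the whole sum is dominated by $k=1$ up to a constant factor. Plugging $k=1$ into the exponent and forcing the result to be at most $\log(4s/\delta)$ produces, after some algebraic rearrangement, exactly the definition of $c_n$ in \eqref{eqn:cn}: the $D(P_0\|P_1)$ term comes from the linear part of the exponent, and the square-root correction comes from the Chebyshev slack, with $\rho^2 n/(n+sK^2)$ arising from the coefficient of $m$ in $m_1$. The hypothesis $m c_n \geq \log s + \log(4/\delta)$ then delivers $s\beta \leq \delta/2$, and combining with the false positive bound gives $\P_e \leq \delta$.

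The main technical obstacle is the last bookkeeping step: verifying that the specific algebraic form of $c_n$ really emerges after one (i) absorbs the floor in $m_k$, (ii) controls the geometric-type tail of the sum over $k$ without losing more than a constant factor (so that $\log K$ does not appear in the numerator), and (iii) isolates the single dominant term at $k=1$ so that the Chebyshev-induced correction shows up as a clean subtractive term inside the bracket defining $c_n$ rather than as an additive tail. The rest of the argument is a fairly routine combination of a union bound, change of measure, and Chebyshev's inequality, enabled by the MLR assumption that lets us reason entirely with the log-likelihood.
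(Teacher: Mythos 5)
Your proposal follows essentially the same route as the paper's proof: the union bound into $(n-s)\alpha + s\beta$, the bound $\alpha \le (1-\rho)^K$ from independence across steps, a Chernoff--Stein style change of measure giving $\P_1 ( L_i^{(m_k)} \le \gamma_k ) \le e^{\gamma_k}$, Chebyshev's inequality to place $\gamma_k \le -m_k(D(P_0||P_1)-\epsilon_k)$ with $\epsilon_k$ controlled uniformly via $m_1 \ge m\rho^2 n/(n+sK^2) - 1$, and a geometric sum over $k$ yielding $\beta \le e^{-mc_n'}/(1-e^{-mc_n'})$. The only piece of the paper's argument you omit is the preliminary verification that the choice $m_k = \lfloor m k \rho^2 n/(n+sK^2)\rfloor$ keeps the expected total number of samples below $mn$ (which is precisely why that coefficient appears), but that check is separate from the error bound itself.
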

\begin{proof}
See Appendix \ref{app:ST}.
\end{proof}

%Here, we assume that both $s < n/\log n$ and $\lim_{n\rightarrow \infty} s = \infty$, which are commonly assumed constraints on the level of sparsity. 
Theorem \ref{thm:STach} quantifies the expected number of samples per dimension in  the finite setting.  The theorem is in terms of a sequence, $c_n$, which, under certain conditions, approaches the Kullback-Leibler divergence between $P_0$ and $P_1$.  Proof of the theorem relies on techniques closely related to the Chernoff-Stein Lemma, and is found in the Appendix.

\begin{cor} \label{cor:ST} \emph{Reliability of Sequential Thresholding.}
If
\begin{eqnarray} \nonumber
  \lim_{n\rightarrow \infty} \frac{m}{\log s } > \frac{1}{D(P_0||P_1)}
\end{eqnarray}
then sequential threshold satisfies
\begin{eqnarray} \nonumber
\lim_{n \rightarrow \infty} \P_e = 0
\end{eqnarray}
with input parameters $\delta = \frac{1}{\log s}$ and $\rho = 1- \frac{1}{\sqrt{\log s}}$, provided $s < n/(\log n)^2$, $\lim_{n\rightarrow \infty} s = \infty$, and $\sigma^2(P_0||P_1) < \infty$.
\end{cor}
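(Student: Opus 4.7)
The plan is to apply Theorem \ref{thm:STach} directly with the specified input parameters and verify that the sufficient condition on $m$ is implied by the hypothesis as $n \to \infty$. The core of the argument is an asymptotic analysis showing that $c_n \to D(P_0||P_1)$ under the specified choices of $\delta$ and $\rho$ together with the sparsity assumption $s < n/(\log n)^2$.

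First I would rewrite the sufficient condition from Theorem \ref{thm:STach} as
\begin{eqnarray} \nonumber
\frac{m}{\log s} \;\geq\; \frac{1 + \tfrac{\log \delta^{-1} + \log 4}{\log s}}{c_n}
\end{eqnarray}
so that the task reduces to showing the right-hand side converges to $1/D(P_0||P_1)$. With $\delta = 1/\log s$ we have $\log \delta^{-1} = \log \log s = o(\log s)$, so the numerator tends to $1$. All the work is therefore in the denominator.

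Next I would control each factor of $c_n$ using the choices $\rho = 1 - 1/\sqrt{\log s}$, $K = \lceil \log_{1/(1-\rho)}(2(n-s)/\delta) \rceil$, and $s < n/(\log n)^2$. Specifically: (i) $\rho^2 \to 1$; (ii) since $\log(1/(1-\rho)) = \tfrac{1}{2}\log\log s$ and $\log(2(n-s)/\delta) = \log n + O(\log\log s)$, we get $K = O(\log n / \log\log s)$; (iii) consequently $sK^2/n \leq K^2/(\log n)^2 = O(1/(\log\log s)^2) \to 0$, so $n/(n+sK^2) \to 1$; (iv) the factor inside the parentheses of the square root satisfies
\begin{eqnarray} \nonumber
\left(\frac{\rho^2 n \log s}{D(P_0\|P_1)(n+sK^2)} - 1 \right)(1-\rho) \;=\; \frac{\sqrt{\log s}}{D(P_0\|P_1)}\bigl(1+o(1)\bigr) \;\to\; \infty,
\end{eqnarray}
so the square root term vanishes. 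Combining these observations gives $c_n \to D(P_0||P_1)$.

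To finish, take any hypothesis $\lim m/\log s > 1/D(P_0||P_1)$; then for all sufficiently large $n$, $m/\log s$ exceeds the right-hand side of the rewritten sufficient condition, so Theorem \ref{thm:STach} yields $\P_e \leq \delta = 1/\log s \to 0$. The main obstacle is verifying item (iv) carefully — one has to check that the interaction between $\rho \to 1$ (which shrinks $1-\rho$) and $\log s \to \infty$ leaves the product diverging; the specific choice $\rho = 1 - 1/\sqrt{\log s}$ is made precisely so that $(1-\rho)$ vanishes slowly enough for $(1-\rho)\log s = \sqrt{\log s} \to \infty$, while still permitting $\rho^2 \to 1$. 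The sparsity hypothesis $s < n/(\log n)^2$ enters only to ensure $sK^2/n \to 0$, which is what allows the $n/(n+sK^2)$ prefactor to approach unity and disentangles the effect of $K$ from the divergence term.
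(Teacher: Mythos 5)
Your proposal is correct and follows essentially the same route as the paper: invoke Theorem \ref{thm:STach} with the stated $\delta$ and $\rho$, and verify that $c_n \to D(P_0||P_1)$ (your items (i)--(iv) are exactly the bookkeeping the paper leaves implicit, with $\rho^2\to 1$, $sK^2/n\to 0$ via $s<n/(\log n)^2$, and $(1-\rho)\log s = \sqrt{\log s}\to\infty$ killing the square-root term). No gaps.
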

\begin{proof}
{The proof follows from Thm. \ref{thm:STach} by setting the input parameters as specified (which implies $ \lim_{n \rightarrow \infty} \delta = 0$). The total number of steps is then $K =  \left \lceil \left.  \log \left(2( n-s) \right) \right/ (2 \log \log s) -1/2  \right\rceil$.  With this $K$, and  with $\rho$ as defined in the statement of the theorem, $\lim_{n\rightarrow \infty } c_n = D(P_0||P_1)$, where $c_n$ is defined in (\ref{eqn:cn}). } Together with the forward part of the theorem, this implies the corollary.
\end{proof}

As Sequential Thresholding is uniform coordinate wise procedure, comparison of Cor. \ref{cor:ST} to Cor. \ref{STcor:LB} shows the procedure is asymptotically optimal in terms of the required number of samples needed for reliable recovery.

Thm. \ref{thm:STach} and  Cor. \ref{cor:ST}  imply that as the size of the problem increases (as $n$ goes to infinity), if $m$ is greater than $D(P_0||P_1)^{-1}\log s$, the procedure will succeed in exact recovery of the sparse support set.  This achieves the lower bound in Cor. \ref{STcor:LB}, which states that any reliable procedure requires at least $D(P_0||P_1)^{-1}\log s$ samples per dimension.   This implies that Sequential Thresholding is in a sense \emph{first order} optimal.  While not investigated here, one could also analyze the rate at which the procedure approaches the lower bound in Cor. \ref{STcor:LB}, although the authors suspect the procedure would not achieve \emph{second} order optimality {(as the rate at which $c_n$ approaches $D(P_0||P_1)^{-1}$ in Corollary \ref{cor:ST} is quite slow)}. {We suspect other procedures proposed for composite tests in the monotone likelihood setting  could be analyzed and/or modified to achieve higher order optimality in the sparse setting, in particular \cite{lai1988nearly}; we leave this for future work.}

%Additionally, sequential thresholding can also be implemented without knowledge of the level of sparsity.   Since its sample requirements are within a factor a small factor of the lower bound, sequential thresholding is automatically {\em adaptive} to unknown levels of sparsity.

%There are two possible implementations of Sequential Thresholding which we refer to as {\em parallel} and {\em scanning}. 
%\\ \vspace{-.12in} \\
%\noindent {\bf parallel}:
%The parallel implementation samples and tests all $n$ components in parallel according to the procedure. 
%\\  \vspace{-.12in} \\
%\noindent {\bf scanning}:  
%The scanning implementation measures and tests the $n$ components in a sequence (which can be arbitrary).  For example, the scanning implementation can begin with component $i=1$ and repeatedly measure and threshold the observations up to $K$ times.  If an observation falls below the threshold at any point, then the scanning procedure immediately moves on to the next component.  If $K$ observations are made without an observation falling below the threshold, then the component is added to the set $S_{K+1}$.  

%The two implementations are equivalent from a theoretical perspective.  The parallel implementation may be more natural for large-scale experimental designs (e.g., in the biological sciences), whereas the scanning implementation is more appropriate in communications applications such as spectrum sensing.  

This paper showed that sequential methods for support recovery of high dimensional sparse signals in noise can succeed using far fewer measurements than non-sequential methods. Specifically, non-sequential methods require the number of measurements to grow logarithmically with the dimension, while sequential methods succeed if the number of measurements grows logarithmically with the level of sparsity. A simple procedure termed Sequential Thresholding was shown to achieve the lower bound asymptotically. Sequential Thresholding can be implemented in the monotone likelihood setting, making it a practical solution for sparse recovery problems encountered in practice.

\bibliographystyle{IEEEtran}
\bibliography{limitsbibo}

% Generated by IEEEtran.bst, version: 1.13 (2008/09/30)
\begin{thebibliography}{10}
\providecommand{\url}[1]{#1}
\csname url@samestyle\endcsname
\providecommand{\newblock}{\relax}
\providecommand{\bibinfo}[2]{#2}
\providecommand{\BIBentrySTDinterwordspacing}{\spaceskip=0pt\relax}
\providecommand{\BIBentryALTinterwordstretchfactor}{4}
\providecommand{\BIBentryALTinterwordspacing}{\spaceskip=\fontdimen2\font plus
\BIBentryALTinterwordstretchfactor\fontdimen3\font minus
  \fontdimen4\font\relax}
\providecommand{\BIBforeignlanguage}[2]{{%
\expandafter\ifx\csname l@#1\endcsname\relax
\typeout{** WARNING: IEEEtran.bst: No hyphenation pattern has been}%
\typeout{** loaded for the language `#1'. Using the pattern for}%
\typeout{** the default language instead.}%
\else
\language=\csname l@#1\endcsname
\fi
#2}}
\providecommand{\BIBdecl}{\relax}
\BIBdecl

\bibitem{malloy2011limits}
M.~Malloy and R.~Nowak, ``On the limits of sequential testing in high
  dimensions,'' in \emph{Signals, Systems and Computers (ASILOMAR), 2011
  Conference Record of the Forty Fifth Asilomar Conference on}.\hskip 1em plus
  0.5em minus 0.4em\relax IEEE, 2011, pp. 1245--1249.

\bibitem{malloy2011sequential}
------, ``Sequential analysis in high-dimensional multiple testing and sparse
  recovery,'' in \emph{Information Theory Proceedings (ISIT), 2011 IEEE
  International Symposium on}.\hskip 1em plus 0.5em minus 0.4em\relax IEEE,
  2011, pp. 2661--2665.

\bibitem{Castro}
A.~Tajer, R.~Castro, and X.~Wang, ``Adaptive spectrum sensing for agile
  cognitive radios,'' in \emph{Acoustics Speech and Signal Processing (ICASSP),
  2010 IEEE International Conference on}, 2010, pp. 2966 --2969.

\bibitem{5454086}
W.~Zhang, A.~Sadek, C.~Shen, and S.~Shellhammer, ``Adaptive spectrum sensing,''
  in \emph{Information Theory and Applications Workshop (ITA), 2010}, 2010, pp.
  1 --7.

\bibitem{SETI}
J.~H. Wolfe, J.~Billingham, R.~E. Edelson, R.~B. Crow, S.~Gulkis, and E.~T.
  Olsen, ``{S}{E}{T}{I} - the search for extraterrestrial intelligence - plans
  and rationale,'' \emph{Life in the Universe. Proceedings of the Conference on
  Life in the Universe, NASA Ames Research Center}, 1981.

\bibitem{NYtimes}
D.~Overbye, ``Search for aliens is on again, but next quest is finding money,''
  \emph{The New York Times}, January 19, 2012.

\bibitem{hsw-drsih-2008}
L.~Hao, A.~Sakurai, T.~Watanabe, E.~Sorensen, C.~A. Nidom, M.~A. Newton,
  P.~Ahlquist, and Y.~Kawaoka, ``{Drosophila RNAi screen identifies host genes
  important for influenza virus replication},'' \emph{Nature}, Jul. 2008.

\bibitem{Muller2007}
H.~M{\"u}ller, R.~Pahl, and H.~Sch{\"a}fer, ``Including sampling and
  phenotyping costs into the optimization of two stage designs for genome wide
  association studies,'' \emph{{Genetic} {epidemiology}}, vol.~31, no.~8, pp.
  844--852, 2007.

\bibitem{sata3}
J.~M. Satagopan and R.~C. Elston, ``{Optimal two-stage genotyping in
  population-based association studies},'' \emph{{Genetic} {Epidemiology}},
  vol.~25, no.~2, pp. 149--157, 2003.

\bibitem{Zehetmayer}
S.~Zehetmayer, P.~Bauer, and M.~Posch, ``Two-stage designs for experiments with
  a large number of hypotheses,'' \emph{Bioinformatics}, vol.~21, pp.
  3771--3777, October 2005.

\bibitem{wald}
A.~Wald, \emph{Sequential Analysis}.\hskip 1em plus 0.5em minus 0.4em\relax
  Hoboken, NJ, USA: John Wiley and Sons, Inc., 1947.

\bibitem{lai2000sequential}
T.~Lai, \emph{Sequential analysis: some classical problems and new challenges},
  ser. Technical report (Stanford University. Dept. of Statistics).\hskip 1em
  plus 0.5em minus 0.4em\relax Dept. of Statistics, Stanford University, 2000.

\bibitem{ghosh1991handbook}
B.~K. Ghosh and P.~K. Sen, \emph{Handbook of sequential analysis}.\hskip 1em
  plus 0.5em minus 0.4em\relax CRC Press, 1991.

\bibitem{1945Wald}
\BIBentryALTinterwordspacing
A.~Wald, ``\BIBforeignlanguage{English}{Sequential tests of statistical
  hypotheses},'' \emph{\BIBforeignlanguage{English}{The Annals of Mathematical
  Statistics}}, vol.~16, no.~2, pp. pp. 117--186, 1945. [Online]. Available:
  \url{http://www.jstor.org/stable/2235829}
\BIBentrySTDinterwordspacing

\bibitem{ghosh1970sequential}
B.~K. Ghosh, \emph{Sequential tests of statistical hypotheses}.\hskip 1em plus
  0.5em minus 0.4em\relax Addison-Wesley Reading, Mass, 1970.

\bibitem{lorden19762}
G.~Lorden, ``2-{SPRT}'s and the modified kiefer-weiss problem of minimizing an
  expected sample size,'' \emph{The Annals of Statistics}, pp. 281--291, 1976.

\bibitem{kiefer1957some}
J.~Kiefer and L.~Weiss, ``Some properties of generalized sequential probability
  ratio tests,'' \emph{The Annals of Mathematical Statistics}, pp. 57--74,
  1957.

\bibitem{schwarz1962asymptotic}
G.~Schwarz \emph{et~al.}, ``Asymptotic shapes of bayes sequential testing
  regions,'' \emph{The Annals of mathematical statistics}, vol.~33, no.~1, pp.
  224--236, 1962.

\bibitem{anderson1960modification}
T.~W. Anderson, ``A modification of the sequential probability ratio test to
  reduce the sample size,'' \emph{The Annals of Mathematical Statistics}, pp.
  165--197, 1960.

\bibitem{SeqAnalysis}
D.~Siegmund, \emph{Sequential Analysis}.\hskip 1em plus 0.5em minus 0.4em\relax
  New York, NY, USA: Springer-Verlag, 2010.

\bibitem{lai1988nearly}
T.~L. Lai, ``Nearly optimal sequential tests of composite hypotheses,''
  \emph{The Annals of Statistics}, pp. 856--886, 1988.

\bibitem{1057838}
E.~Posner, ``Optimal search procedures,'' \emph{Information Theory, IEEE
  Transactions on}, vol.~9, no.~3, pp. 157 -- 160, jul 1963.

\bibitem{4585341}
E.~Bashan, R.~Raich, and A.~Hero, ``Optimal two-stage search for sparse targets
  using convex criteria,'' \emph{Signal Processing, IEEE Transactions on},
  vol.~56, no.~11, pp. 5389 --5402, nov. 2008.

\bibitem{5710433}
E.~Bashan, G.~Newstadt, and A.~Hero, ``Two-stage multiscale search for sparse
  targets,'' \emph{Signal Processing, IEEE Transactions on}, vol.~59, no.~5,
  pp. 2331 --2341, may 2011.

\bibitem{5961845}
L.~Lai, H.~Poor, Y.~Xin, and G.~Georgiadis, ``Quickest search over multiple
  sequences,'' \emph{Information Theory, IEEE Transactions on}, vol.~57, no.~8,
  pp. 5375--5386, 2011.

\bibitem{5074721}
J.~Haupt, R.~Castro, and R.~Nowak, ``Adaptive discovery of sparse signals in
  noise,'' in \emph{Signals, Systems and Computers, 2008 42nd Asilomar
  Conference on}, oct. 2008, pp. 1727 --1731.

\bibitem{Haupt}
------, ``{Distilled Sensing}: {Adaptive} sampling for sparse detection and
  estimation,'' \emph{Information Theory, IEEE Transactions on}, vol.~57,
  no.~9, pp. 6222--6235, 2011.

\bibitem{castro2012adaptive}
R.~Castro, ``Adaptive sensing performance lower bounds for sparse signal
  estimation and testing,'' \emph{arXiv preprint arXiv:1206.0648}, 2012.

\bibitem{bubeck2009pure}
S.~Bubeck, R.~Munos, and G.~Stoltz, ``Pure exploration in multi-armed bandits
  problems,'' in \emph{Algorithmic Learning Theory}.\hskip 1em plus 0.5em minus
  0.4em\relax Springer, 2009, pp. 23--37.

\bibitem{jamieson2013finding}
K.~Jamieson, M.~Malloy, R.~Nowak, and S.~Bubeck, ``On finding the largest mean
  among many,'' \emph{arXiv preprint arXiv:1306.3917}, 2013.

\bibitem{even2002pac}
E.~Even-Dar, S.~Mannor, and Y.~Mansour, ``{PAC} bounds for multi-armed bandit
  and markov decision processes,'' in \emph{Computational Learning
  Theory}.\hskip 1em plus 0.5em minus 0.4em\relax Springer, 2002, pp. 193--209.

\bibitem{Cover:1991:EIT:129837}
T.~M. Cover and J.~A. Thomas, \emph{Elements of information theory},
  2nd~ed.\hskip 1em plus 0.5em minus 0.4em\relax New York, NY, USA:
  Wiley-Interscience, 2005.

\bibitem{1948}
A.~Wald and J.~Wolfowitz, ``\BIBforeignlanguage{English}{Optimum character of
  the sequential probability ratio test},''
  \emph{\BIBforeignlanguage{English}{The Annals of Mathematical Statistics}},
  vol.~19, no.~3, pp. 326--339, 1948.

\bibitem{SPRTbounds1960}
\BIBentryALTinterwordspacing
M.~N. Ghosh, ``\BIBforeignlanguage{English}{Bounds for the expected sample size
  in a sequential probability ratio test},''
  \emph{\BIBforeignlanguage{English}{Journal of the Royal Statistical Society.
  Series B (Methodological)}}, vol.~22, no.~2, pp. pp. 360--367, 1960.
  [Online]. Available: \url{http://www.jstor.org/stable/2984106}
\BIBentrySTDinterwordspacing

\bibitem{karlin1956theory}
S.~Karlin and H.~Rubin, ``The theory of decision procedures for distributions
  with monotone likelihood ratio,'' \emph{The Annals of Mathematical
  Statistics}, vol.~27, no.~2, pp. 272--299, 1956.

\end{thebibliography}

%%%%%%%%%%%%%%%%%%%%%%%%%%%%%%%%%%%%%%%%%%%%%%%%%%
%%%%%%%%%%%%%%%%%%%%%%%%%%%

%\newpage
\appendices

\section{Proof of Theorem \ref{thm:LBSeq}}

\label{app:LB}
\begin{proof}
{We first bound the family wise error rate in terms of the false positive and false negative probabilities associated with incorrectly assigning or excluding any element from $\hat{\mathcal{S}}$.    The coordinate-wise assumption implies that the individual error rates at each index are the same.  To be specific, define the event
 \begin{eqnarray} \label{eqn:E}
 \mathcal{E}_i = \{i \in \hat{\S} \Delta \S \} .
 \end{eqnarray}
 A coordinate wise procedure then has $\P({\cal{E}}_{i}) = \P({\cal{E}}_{i'})$ for all $i,i' \in \S$, and, likewise  $\P({\cal{E}}_{i}) = \P({\cal{E}}_{i'})$ for all $i,i' \not \in \S$.  Under this assumption we can simplify notation and define the false positive and false negative rates which are independent of the particular index: 
\begin{eqnarray} \label{eqn:alphabeta}
\alpha = \P({\cal{E}}_i | i \not \in \mathcal{S} )   \qquad \beta =  \P({\cal{E}}_i | i \in \mathcal{S} ).
\end{eqnarray} }
 From  (\ref{eqn:FWER1}),
\begin{eqnarray} \nonumber
\P_e &=& \P\left(\bigcup_{i \not \in \S} {\cal{E}}_i \cup \bigcup_{i \in \S} {\cal{E}}_i \right) \\ \nonumber
&=& 1 - \P\left(\bigcap_{i \not \in \S } {\cal{E}}_i^c \cap \bigcap_{i \in \S} {\cal{E}}_i^c\right) \\ 
 &=& 1 - (1-\beta)^s (1-\alpha)^{n-s} \nonumber \\  
&\geq& 1-e^{-\beta s} e^{-\alpha(n-s)} \label{eqn:Pelowbnd}
\end{eqnarray}
where the last inequality follows as $(1 - \beta)^n \leq e^{-\beta n}$ for $\beta \in [0,1]$ and $n \in \{1,\dots\}$.  To continue, we can bound the expected number of samples associated with any particular index.  From \cite{SeqAnalysis}, Thm. 2.39, the following holds for any binary hypothesis test 
\begin{eqnarray} \nonumber
m_0 \geq \frac{\alpha \log \left(\frac{\alpha}{1-\beta} \right) + (1-\alpha) \log \left(\frac{1-\alpha}{\beta} \right)  }{D(P_0||P_1)}
\end{eqnarray}
where $m_0$ is the expected number of samples of any component $i \not \in \mathcal{S}$.  We can further bound the expected number of samples as
\begin{eqnarray} \nonumber
m_0 & \geq & \frac{ \alpha \log \alpha +  (1-\alpha) \log ( 1-\alpha) +  (1-\alpha) \log {\beta}^{-1}} { D(P_0||P_1)}\\ \nonumber
& \geq & \frac{ (1-\alpha) \log \beta^{-1} - \log 2}{ D(P_0||P_1)}
\end{eqnarray}
where the first inequality follows as $\alpha \log (1/(1-\beta) )\geq 0$, and the last inequality follows as $\alpha \log \alpha + (1-\alpha) \log (1-\alpha) \geq \log(1/2)$, all $\alpha \in [0,1]$.  Likewise 
\begin{eqnarray} \nonumber
m_1 &\geq & \frac{(1-\beta) \log \left(\frac{1-\beta}{\alpha} \right) + \beta \log \left(\frac{\beta}{1-\alpha} \right) }{D(P_1||P_0)}\\ \nonumber
& \geq & \frac{ (1-\beta) \log \alpha^{-1} - \log 2}{ D(P_1||P_0)}
\end{eqnarray}
where $m_1$ is the average number of samples given $i \in \S$, and the first inequality is again from Thm. 2.39 of \cite{SeqAnalysis}.
Let $D_{\mathrm{KL}} = \max \left\{D(P_0||P_1),  D(P_1||P_0) \right\}$.  We have
\begin{eqnarray} \nonumber
m &=& \frac{(n-s) m_0 + s \; m_1}{n} \\ \nonumber
 &\geq & \frac{(n-s) (1-\alpha) \log \beta^{-1}  + s  (1-\beta) \log \alpha^{-1} - n \log 2}{n D_{\mathrm{KL}}}.
\end{eqnarray}
If $\alpha \leq \beta$ we have 
\begin{eqnarray} \nonumber
m &\geq & \frac{(n-s) (1-\beta) \log \beta^{-1}  + s  (1-\beta) \log \beta^{-1} - n \log 2}{n  D_{\mathrm{KL}}} \\ \nonumber
&=& \frac{(1-\beta) \log \beta^{-1} - \log 2}{D_{\mathrm{KL}}} \\ \nonumber
&\geq & \frac{ \log \left( \frac{1}{2\beta}\right) -  \log 2}{D_{\mathrm{KL}}}
\end{eqnarray}
where the last inequality is easily verified for $\beta \in [0,1]$.

Imposing the condition in the forward part of the theorem, $m \leq (\log s + \log  (4 \delta)^{-1} )/D_{\mathrm{KL}}$ gives
\begin{eqnarray} \nonumber
\frac{\log s   + \log \left(\frac{1}{4\delta}\right)}{D_{\mathrm{KL}}} \geq m \geq \frac{ \log \left( \frac{1}{2\beta}\right) - \log 2}{D_{\mathrm{KL}}}
\end{eqnarray}
which implies
\begin{eqnarray} \nonumber
 \log s + \log \left(\frac{1}{4\delta}\right) \geq  \log \left( \frac{1}{2\beta}\right)  - \log 2
\end{eqnarray}
and thus
\begin{eqnarray} \nonumber
\beta &\geq& {\delta}/{s}.
\end{eqnarray}
Hence, 
\begin{eqnarray} \nonumber
\P_e &\geq & 1 - e^{-\delta } e^{-(n-s) \alpha} \geq  1 - e^{-\delta  }.
\end{eqnarray}
Conversely if $\beta > \alpha$
\begin{eqnarray} \nonumber
m &\geq &\frac{\log \left( \frac{1}{2\alpha}\right) - \log 2}{D_{\mathrm{KL}}}
\end{eqnarray}
and 
\begin{eqnarray} \nonumber
\P_e &\geq & 1 - e^{-s \beta} e^{-\frac{ (n-s)  \delta }{s}  \alpha} 
\end{eqnarray}
which, provided $s < n/2$, gives
\begin{eqnarray} \nonumber
\P_e &\geq & 1 - e^{-\delta}.
\end{eqnarray}
completing the proof.  
\end{proof}

\section{Proof of Theorem \ref{thm:NSasmp}}
\label{app:NSMethods}
\begin{proof}
We write the family wise error rate as in (\ref{eqn:Pelowbnd}):
\begin{eqnarray} \nonumber
\P_e
&\geq& 1-e^{-\alpha(n-s)} e^{-\beta s} \\
&\geq  & 1-e^{-\alpha(n-s)} \nonumber
\end{eqnarray} {
where $\mathcal{E}_i$ is defined in (\ref{eqn:E})  and $\alpha$ and $\beta$ are defined in (\ref{eqn:alphabeta}).
Note that if $\alpha > \frac{1}{n-s}$, then $\P_e \geq 1-e^{-1} \geq 1/2$.  We have that
\begin{eqnarray} \nonumber
{\limsup_{m\rightarrow \infty} } \ \frac{1}{m} \log \alpha^{-1} < {\liminf_{m\rightarrow \infty} } \ \frac{1}{m} \log (n-s)
\end{eqnarray}
then ${\liminf_{n\rightarrow \infty}} \P_e \geq 1/2 $ (since the above inequality implies $\alpha > \frac{1}{n-s}$ for sufficiently large $m$). Next assume $\beta < 1/2$ (the result of the Theorem is trivial if $\beta \geq 1/2$).  From \cite[p. 386]{Cover:1991:EIT:129837} (Chernoff Information),
\begin{eqnarray}
\limsup_{m\rightarrow \infty} \frac{1}{m} \log \alpha^{-1} = D(P_{\lambda}||P_{0}) \leq  D(P_1||P_{0})
\end{eqnarray}
%and 
%\begin{eqnarray}
%\lim_{m\rightarrow \infty} \frac{1}{m} \log \beta^{-1} = D(P_{\lambda}||%P_{1}).
%\end{eqnarray}
where
\begin{eqnarray} \nonumber
P_{\lambda} = \frac{P_0^\lambda P_1^{1-\lambda}}{\int_{\Omega} P_0^\lambda P_1^{1-\lambda}dy}
\end{eqnarray}
for $\lambda \in [0,1]$.  Thus, if 
\begin{eqnarray} \nonumber
\liminf_{m\rightarrow \infty} \frac{1}{m} \log (n-s) \geq D(P_1||P_{0})
\end{eqnarray}
 $\liminf_{n\rightarrow \infty} \P_e \geq 1/2 $.  Since $\lim_{n\rightarrow \infty} m = \infty$, this implies the result.  If
\begin{eqnarray} \label{eqn:asmpNS}
  \limsup_{n\rightarrow \infty} \frac{m}{\log n} < \frac{1}{D(P_1||P_0)}
\end{eqnarray}
then $\liminf_{n\rightarrow \infty}\P_e \geq 1/2$.}
\end{proof}

\section{Proof of Theorem \ref{thm:SPRTach}}
\label{app:SPRT}

\begin{proof}
For an SPRT with thresholds $\gL$ and $\gU$, from \cite{SeqAnalysis}, the following well known inequalities hold:
\begin{eqnarray}
\alpha \leq \gU^{-1} = \frac{1}{(n-s)^{1+\epsilon}} \qquad \qquad \beta \leq \gL = \frac{1}{s^{1+\epsilon}}
\end{eqnarray}
where $\alpha$ and $\beta$ are defined in (\ref{eqn:alphabeta}). From a union bound on the family-wise error rate
\begin{eqnarray}
\lim_{n\rightarrow \infty} \P_e \leq \lim_{n \rightarrow \infty} (n-s) \alpha + s \beta  = 0
\end{eqnarray}
implying the forward portion of the lemma.  

We can write the expected number of measurements per dimension as
\begin{eqnarray}  \nonumber
m &=& \frac{(n-s)\E_0\left[J_i\right] +s \;\E_1\left[J_i\right] }{n}
\end{eqnarray}
By Wald's identity \cite{SeqAnalysis}
\begin{eqnarray} \nonumber
\E_1\left[J_i\right] &=& \frac{\E_1\left[L_i^{(J_i)} \right]}{ \E_1\left[ L_i^{(1)} \right]} =  \frac{\E_1\left[L_i^{(J_i)} \right]}{ D(P_1||P_0)} % \\
%&=&   \frac{(1-\beta) \E_1\left[ \left. L_i^{(J_i)} \right \vert  L_i^{(J_i)} > \gU \right] + \beta\E_1\left[ \left. L_i^{(J_i)} \right \vert L_i^{(J_i)} < \gL  \right]   }{ D(P_1||P_0)} 
%\leq 
\end{eqnarray}
and similarly,  $\E_0\left[J_i\right]  = \frac{-\E_0\left[L_i^{(J_i)} \right]}{ D(P_0||P_1)}$.
Dividing by $\log s$ and taking the limit, we have
\begin{eqnarray} \nonumber 
&& \hspace*{-.9cm} \lim_{n \rightarrow \infty} \frac{m}{\log s}  = \lim_{n \rightarrow \infty} \frac{(n-s)\E_0\left[J_i\right] +s \;\E_1\left[J_i\right] }{n \log s}\\  \nonumber
& = & \hspace{-.4cm} \lim_{n \rightarrow \infty} \frac{-(n-s)\E_0\left[L_i^{(J_i)} \right] }{n \log (s) D(P_0||P_1)} +\lim_{n\rightarrow \infty} \frac{ s \;\E_1\left[L_i^{(J_i)} \right] }{n \log (s) D(P_1||P_0)} \\
& \leq & \hspace{-.4cm} \lim_{n \rightarrow \infty} \frac{(n-s)  (\log \gL^{-1} + C_1 )    }{n \log (s) D(P_0||P_1)}
 + \lim_{n \rightarrow \infty} \frac{s   (\log \gU + C_2 )   }{n \log (s) D(P_0||P_1)} \nonumber
\end{eqnarray}
where the inequality follows by the assumptions in (\ref{STeqn:ass1}) and as 
\begin{eqnarray} \nonumber
-\E_0\left[L_i^{(J_i)} \right]  &=&   - (1-\alpha) \E_0\left[ \left. L_i^{(J_i)} \right \vert L_i^{(J_i)} < \log \gL  \right]    \\&& \hspace*{.5cm} -   \nonumber
\alpha \E_0\left[ \left. L_i^{(J_i)} \right \vert L_i^{(J_i)} > \gU \right]  \\ \nonumber
 &\leq &  (1-\alpha)  (\log \gL^{-1} + C_1 ) + \alpha(\log \gU^{-1} - C_2) \\ \nonumber
 &\leq & (1-\alpha)  (\log \gL^{-1} + C_1 ) \\ \nonumber
 &\leq& \log \gL^{-1} + C_1 
\end{eqnarray}
and likewise,
\begin{eqnarray} \nonumber
\E_1\left[L_i^{(J_i)} \right] \leq \log \gU + C_2 . 
\end{eqnarray}
Using the prescribed values of $\gU$ and $\gL$ gives 
\begin{eqnarray}
\lim_{n \rightarrow \infty} \frac{m}{\log s} = \frac{1+\epsilon}{D(P_0||P_1)}
\end{eqnarray}
provided $s< n/ \log n$, completing the proof.
\end{proof}

\section{Proof of Theorem \ref{thm:STach}}
\label{app:ST}

\begin{proof}
{By design, Sequential Thresholding satisfies the measurement budget in (\ref{eqn:budget}). Consider the expected number of samples required by Sequential Thresholding:
\begin{eqnarray} \label{eqn:budgetEq}
  \E\left[\sum_{i,j} \Gamma_{i,j} \right]  & =&   \mathbb{P}\left(\bigcup_{i \in \mathcal{S}} \mathcal{E}_i \right) \E\left[\sum_{i,j} \Gamma_{i,j} \left\vert   \bigcup_{i \in \mathcal{S}} \mathcal{E}_i \right. \right]   \\
     && \hspace*{-1.8cm} +\; \mathbb{P}\left( \bigcap_{i  \in \mathcal{S}} \mathcal{E}_i^c \right)\E\left[\sum_{i,j} \Gamma_{i,j} \left\vert  \bigcap_{i  \in \mathcal{S}} \mathcal{E}_i^c \right.  \right] \nonumber
\end{eqnarray}
where the equality follows from the law of total probability and conditioning on one or more false negative events.  From Alg. \ref{alg:sds}, one or more false negatives can only reduce the total number of samples, and we have 
\begin{eqnarray} \nonumber
\E\left[\sum_{i,j} \Gamma_{i,j}  \left\vert   \bigcap_{i \in \mathcal{S}} \mathcal{E}_i^c \right. \right]  \geq \E\left[\sum_{i,j} \Gamma_{i,j} \left\vert   \bigcup_{i \in \mathcal{S}} \mathcal{E}_i \right. \right].
\end{eqnarray}
Combining this with (\ref{eqn:budgetEq}) gives
\begin{eqnarray}      \nonumber
 && \hspace*{-1cm} \E\left[\sum_{i,j} \Gamma_{i,j} \right]  \leq   \E\left[\sum_{i,j} \Gamma_{i,j} \left\vert   \bigcap_{i \in \mathcal{S}} \mathcal{E}_i^c \right.  \right] \\  \nonumber
  = &&  \hspace*{-.2cm}  \sum_{k=1}^{K} m_k \left((1-\rho)^{k-1} (n-s) + s \right) \\ \nonumber
   \leq && \hspace*{-.2cm} \sum_{k=1}^{K} m \left(\frac{n}{n+s K^2}\right)   k  \rho^2 \left((1-\rho)^{k-1} (n-s) + s \right) \\ \nonumber
  \leq & & \hspace*{-.2cm}  mn \left(\frac{n-s+sK^2}{n+s K^2}\right) \leq  mn.
\end{eqnarray}
Here, the equality follows from independence of the samples across the $K$ steps.  The third inequality follows as the sum is an arithmetico-geometric series for $\rho \in [1/2, 1)$ and as $\sum_{k=1}^K k \leq K^2$. As the budget is satisfied, this implies the procedure will use less than $m$ samples per dimension. We continue by bounding the error rates.}

\begin{figure*}[t!]
% ensure that we have normalsize text
\normalsize
% Store the current equation number.
\newcounter{MYtempeqncnt}
\setcounter{MYtempeqncnt}{\value{equation}}
% Set the equation number to one less than the one
% desired for the first equation here.
% The value here will have to changed if equations
% are added or removed prior to the place these
% equations are referenced in the main text.
\setcounter{equation}{32}
\begin{eqnarray} \label{eqn:long1}
\beta &\leq& \sum_{k=1}^K \mathbb{P}_1 \left( L_{i}^{(m_k)} \leq \gamma_k  \right) \leq  \sum_{k=1}^K e^{-m_k(D(P_0||P_1) - \epsilon_k)  } \\ \label{eqn:cnprime}
&\leq& \sum_{k=1}^K \exp \left(-m k \underbrace{ \rho^2 \left( \frac{n}{n+sK^2} \right) \left(D(P_0||P_1) - \sqrt{\frac{\sigma^2(P_0||P_1)}{\left( \frac{m \rho^2  n}{n+sK^2} -1\right)  \left(1-\rho\right)}}\right)}_{c_n'}  \right)
\leq  \frac{e^{-mc_n'}}{ 1 - e^{-mc_n'}} \label{eqn:betabnd1}
\end{eqnarray}
\hrulefill
\begin{eqnarray} \label{eqn:cnlong}
c_n = \rho^2 \left( \frac{n}{n+sK^2} \right) \left(D(P_0||P_1) - \sqrt{\frac{\sigma^2(P_0||P_1)}{\left( \frac{\rho^2  n \log s}{{D(P_0||P_1)} (n+sK^2)} -1\right)  \left(1-\rho\right)}}\right)
\end{eqnarray}
\setcounter{equation}{\value{MYtempeqncnt}}
\addtocounter{equation}{0}
% IEEE uses as a separator
\hrulefill
% The spacer can be tweaked to stop underfull vboxes.
\vspace*{4pt}
\end{figure*}

Applying a union bound to the family wise error rate, we have
\begin{eqnarray} \label{eqn:fwer_st}
  \P_e &\leq& (n-s) \alpha + s \beta
\end{eqnarray}
where $\alpha$ and $\beta$ are defined in (\ref{eqn:alphabeta}).
The false negative event is given as
\begin{eqnarray} \nonumber
\beta &=& \P_1 \left(\bigcup_{k=1}^K  L_{i}^{(m_k)} < \gamma_k \right) \\
&\leq& \sum_{k=1}^K \P_1 \left( L_{i}^{(m_k)} < \gamma_k \right) \nonumber
\end{eqnarray}
{where we define $\P_1(\cdot) = \P(\cdot|i \in \S)$ to simplify notation.
We continue by bounding the above probability. The following analysis is closely related to the Chernoff-Stein lemma \cite{Cover:1991:EIT:129837}, but modified for one sided tests.   Note by the mononicity of $T_i^{(m_k)}$ with respect to $L_i^{(m_k)}$, we can analysis the test using $L_i^{(m_k)}$.}  Let $\bm{y}_k = (y_{1},\dots,y_{m_k})$ and define the region $\mathcal{A}_k \subset \mathbb{R}^{m_k}$ as
\begin{eqnarray} \nonumber
\mathcal{A}_k:=\{\y_k: L_i^{(m_k)}(\bm{y}_k)  < \gamma_k \}. 
\end{eqnarray}
%for $\gamma \in\left( -D(P_0||P_1) , D(P_1||P_0) \right)$.  
For all $\y_k$ in $\mathcal{A}_k$, by definition, 
\begin{eqnarray} \nonumber
L_i^{(m_k)}(\bm{y}_k) = \sum_{j=1}^{m_k} \log\frac{P_1(y_j)}{P_0(y_j) } < \gamma_k
%\prod_{j=1}^{\rho m} { \frac{P_1(y_j)}{P_0(y_j) }} < e^{\rho m \gamma}  \\ \nonumber
\end{eqnarray}
which implies
\begin{eqnarray} \nonumber
P^{{(m_k)}}_1(\y_k)  < e^{\gamma_k} P^{(m_k)}_0(\y_k)
\end{eqnarray}
where $P^{(m_k)}_1(\y_k)  =  \prod_{j=1}^{m_k} P_1(y_j)$.  Again by definition 
{
\begin{eqnarray} \nonumber
\mathbb{P}_1(L_i^{(m_k)} < \gamma_k) &=& \int_{\mathcal{A}} P_1^{(m_k)}(\y) d\y \\  \nonumber
&\leq& \int_{\mathcal{A}} e^{\gamma_k} P_0^{(m_k)}(\y) d\y \\ \nonumber
&=& e^{\gamma_k}  \int_{\mathcal{A}} P_0^{(m_k)}(\y) d\y \\ \nonumber
&=& e^{\gamma_k}  \mathbb{P}_0(L_i^{(m_k)}< \gamma_k ) \\ \label{eqn:P1bnd1}
&\leq & e^{\gamma_k}.
\end{eqnarray}
The above relationship holds for any $\gamma_k$, including that defined in (\ref{eqn:rho}).  Next define $\epsilon_k >0$ such that
\begin{eqnarray} \label{eqn:thesDef}
\gamma_k = -m_k(D(P_0||P_1) - \epsilon_k)
\end{eqnarray}
where $\gamma_k$ is given in (\ref{eqn:rho}). 
%From (\ref{eqn:P1bnd1}), this gives 
%\begin{eqnarray}  \nonumber
%\P_1 \left( L_i^{(m_k)} < \gamma_k  \right) \leq  e^{-m_k(D(P_0||P_1) - \epsilon_k)  }.
%\end{eqnarray} 
It remains to show what values of $\epsilon_k$ simultaneously satisfy (\ref{eqn:rho}) for any $\rho \in [1/2,1)$.  Specifically, we need to find the range of values of $\epsilon_k$ that satisfy
\begin{eqnarray} \label{eqn:P0L}
\P_0 \left( L_i^{(m_k)} \leq  -m_k(D(P_0||P_1) - \epsilon_k) \right) \geq \rho.
\end{eqnarray} 
Proceeding, 
\begin{eqnarray} \nonumber
& & \hspace*{-1.4cm} \P_0 \left( L_i^{(m_k)} \leq  -m_k(D(P_0||P_1) - \epsilon_k) \right) \\ \nonumber
&=& \P_0\left(\frac{1}{m_k} L_i^{(m_k)} + D(P_0||P_1) \leq \epsilon_k \right) \\  \nonumber
&=& \P_0\left(D(P_0||P_1) - \frac{1}{m_k}\sum_{j=1}^{ m_k} \log\frac{P_0(Y_j)}{P_1(Y_j)} \leq \epsilon_k\right) \\  \label{eqn:KLconverST} 
& \geq & 1 - \frac{\sigma^2(P_0||P_1)}{m_k \epsilon_k^2} 
\end{eqnarray}
where the last line follows from Chebyshev's inequality.  To insure  that (\ref{eqn:P0L}) can be satisfied, we have the following condition
\begin{eqnarray} \nonumber
\epsilon_k \geq \sqrt{\frac{\sigma^2(P_0||P_1)}{m_k (1-\rho)}}.
\end{eqnarray}
As $m_k$ is smallest for $k = 1$, from the definition of $m_k$ in (\ref{eqn:mk}),
\begin{eqnarray} \nonumber
m_k \geq m_1 \geq  \frac{m \rho^2 n}{n+sK^2} - 1
\end{eqnarray}
and the condition can be satisfied for any $k$ provided
\begin{eqnarray} \label{eqn:ekReq}
\epsilon_k \geq \sqrt{\frac{\sigma^2(P_0||P_1)}{\left( \frac{m \rho^2  n}{n+sK^2} -1\right)  \left(1-\rho\right)}}.
\end{eqnarray}
}

To summarize developments thus far, we've shown that if $\gamma_k = -m_k (D(P_0||P_1) - \epsilon_k) $ then both
\begin{eqnarray} \nonumber
\P_1 \left( L_i^{(m_k)} < \gamma_k  \right) \leq  e^{-m_k(D(P_0||P_1) - \epsilon_k)  }
\end{eqnarray}
and 
\begin{eqnarray} \nonumber
\P_0 \left( L_i^{(m_k)} \leq \gamma_k \right)  \geq \rho
\end{eqnarray}
for any $\epsilon_k$ that satisfies (\ref{eqn:ekReq}).

\addtocounter{equation}{3}

Continuing with $m_k$ as specified in (\ref{eqn:mk}), gives (\ref{eqn:long1}) -- (\ref{eqn:cnprime}), where the last inequality follows as the sum is geometric.   
With  $K =  \left \lceil \log_\frac{1}{1-\rho} \left( \frac{2(n-s)}{\delta} \right) \right \rceil$, the false positive rate is then
\begin{eqnarray} \nonumber
\alpha &\leq & (1-\rho)^K \\ \nonumber
&\leq & (1-\rho)^{\log_\frac{1}{1-\rho} \left( \frac{2(n-s)}{\delta} \right) } \\
&\leq & \frac{\delta}{2(n-s)} \label{eqn:alphabnd1}.
\end{eqnarray}
Combining (\ref{eqn:betabnd1}) and (\ref{eqn:alphabnd1}) gives
\begin{eqnarray} \nonumber
\P_e \leq  \frac{\delta}{2} + s \; \frac{e^{-m c_n'}}{ 1 - e^{-mc_n'}}.
\end{eqnarray}
Next, from the statement of the theorem, let 
\begin{eqnarray} \nonumber
m \geq \frac{\log \left( \frac{4s}{\delta}\right)}{c_n}
\end{eqnarray}
where $c_n$ is defined in (\ref{eqn:cnlong}).
Notice $c_n \leq D(P_0||P_1)$.  Thus, $m \geq \frac{\log s}{D(P_0||P_1)}$, and
 \begin{eqnarray} \nonumber
m \geq \frac{\log \left( \frac{4s}{\delta}\right)}{c_n} \geq \frac{\log \left( \frac{4s}{\delta}\right)}{c_n'} 
\end{eqnarray} 
where $c_n'$ is given in (\ref{eqn:cnprime}).
%\begin{eqnarray} \nonumber
%c_n' = \rho^2 \left( \frac{n}{n+sK^2} \right) \left(D(P_0||P_1) - \sqrt{\frac{\sigma^2(P_0||P_1)}{\left( \frac{m \rho^2  n}{n+sK^2} -1\right)  \left(1-\rho\right)}}\right).
%\end{eqnarray}
As $m \geq \frac{\log \left( \frac{4s}{\delta}\right)}{c_n'}$,
\begin{eqnarray} \nonumber
\P_e &\leq & \frac{\delta}{2} +  \frac{\delta/4}{ 1 - \delta/(4s)}  \leq  \delta
\end{eqnarray}
where the last inequality holds for $s \geq 1$, $\delta \leq 1$ which proves the theorem.

\end{proof}

\begin{IEEEbiographynophoto}{Matthew L. Malloy} received the B.S. degree in electrical and computer engineering from the University of Wisconsin in 2004, the M.S. degree from Stanford University in 2005 in electrical engineering, and the Ph.D. degree from the University of Wisconsin in 2012 in electrical and computer engineering.   

Dr. Malloy is currently a Principle Data Scientist at ComScore, Inc.  Previously, Dr. Malloy was postdoctoral researcher at the University of Wisconsin in the Wisconsin Institutes for Discovery from 2012 - 2013.  From 2005-2008, he was a signal processing and radio frequency design engineering for Motorola in Arlington Heights, IL, USA.  In 2008 he received the Wisconsin Distinguished Graduate fellowship.  In 2009, 2010 and 2011 he received the Innovative Signal Analysis fellowship.  

Dr. Malloy has served as a reviewer for the IEEE Transactions on Signal Processing, the IEEE Transactions on Information Theory, IEEE Transactions on Automatic Control, the IEEE Journal of Oceanic Engineering and the Annals of Statistics, and has served on the technical program committee for the IEEE Global Conference on Signal and Information Processing.  He was awarded first place in the best student paper contest at the Asilomar Conference on Signals and Systems in 2011.   His research interests include signal processing, estimation and detection, information theory and optimization, with a broad range of applications including data science, communications and biology.
\end{IEEEbiographynophoto}

\begin{IEEEbiographynophoto}{Robert D. Nowak} received the B.S., M.S., and Ph.D. degrees in electrical engineering from the University of Wisconsin-Madison in 1990, 1992, and 1995, respectively.  He was a Postdoctoral Fellow at Rice University in 1995-1996, an Assistant Professor at Michigan State University from 1996-1999,  held Assistant and Associate Professor positions at Rice University from 1999-2003, and is now the McFarland-Bascom Professor of Engineering at the University of Wisconsin-Madison. 

Professor Nowak has held visiting positions at INRIA, Sophia-Antipolis (2001), and Trinity College, Cambridge (2010). He has served as an Associate Editor for the IEEE Transactions on Image Processing and the ACM Transactions on Sensor Networks, and as the Secretary of the SIAM Activity Group on Imaging Science. He was General Chair for the 2007 IEEE Statistical Signal Processing workshop and Technical Program Chair for the 2003 IEEE Statistical Signal Processing Workshop and the 2004 IEEE/ACM International Symposium on Information Processing in Sensor Networks. 

Professor Nowak received the General Electric Genius of Invention Award (1993), the National Science Foundation CAREER Award (1997), the Army Research Office Young Investigator Program  Award (1999), the Office of Naval Research Young Investigator Program Award (2000), the IEEE Signal Processing Society Young Author Best Paper Award (2000), the IEEE Signal Processing Society Best Paper Award (2011), and the ASPRS Talbert Abrams Paper Award (2012). He is a Fellow of the Institute of Electrical and Electronics Engineers (IEEE). His research interests include signal processing, machine learning, imaging and network science, and applications in communications, bioimaging, and systems biology.
\end{IEEEbiographynophoto}

\end{document}